\documentclass[sigplan,screen,acmthm,]{acmart}

\usepackage{ifdraft}

\ifdefined\ifdraft\else 
  \makeatletter                   
  \def\mdseries@tt{m}
  \@ACM@printacmreffalse
  \@ACM@nonacmtrue
  \makeatother                    
  \usepackage[plain]{fancyref}
  \usepackage{color}
  \usepackage{hyperref}           
  \hypersetup{
    colorlinks=true,
    linkcolor=blue,
    filecolor=red,      
    urlcolor=magenta,
    breaklinks=true,            
  }
  \usepackage{breakurl}         
\fi

\usepackage[english]{babel}

\usepackage{mathtools,bm}

\usepackage[multiuser,layout=inline]{fixme}
\usepackage{shuffle}

\emergencystretch=20pt

\FXRegisterAuthor{mc}{amc}{MC}
\FXRegisterAuthor{cp}{acp}{CP}
\FXRegisterAuthor{cb}{acb}{CB}
\FXRegisterAuthor{tz}{atz}{TZ}
\fxsetup{theme=color}

\usepackage{cleveref}



\input{define.orgtex}


\begin{document}

\title[The Regular Languages of First-Order Logic with One Alternation]{The
  Regular Languages of First-Order Logic\\with One Alternation}

\author{Corentin Barloy}
\email{corentin.barloy@inria.fr}
\affiliation{%
  \institution{Univ. Lille, CNRS, INRIA, Centrale Lille, UMR 9189 CRIStAL}
  \country{France}
}

\author{Michaël Cadilhac}
\email{michael@cadilhac.name}
\orcid{0000-0001-9828-9129}
\affiliation{%
  \institution{DePaul University}
  \country{USA}}

\author{Charles Paperman}
\email{charles.paperman@univ-lille.fr}
\orcid{0000-0002-6658-5238}
\affiliation{%
  \institution{Univ. Lille, CNRS, INRIA, Centrale Lille, UMR 9189 CRIStAL}
  \country{France}
}

\author{Thomas Zeume}
\email{thomas.zeume@rub.de}
\orcid{0000-0002-5186-7507}
\affiliation{%
  \institution{Ruhr-Universität Bochum}
  \country{Germany}
}

\begin{abstract}
  The regular languages with a neutral letter expressible in first-order logic
  with one alternation are characterized.  Specifically, it is shown that if an
  arbitrary \(\Sigma_2\) formula defines a regular language with a neutral letter, then
  there is an equivalent \(\Sigma_2\) formula that only uses the order predicate.  This
  shows that the so-called Central Conjecture of Straubing holds for
  \(\Sigma_2\) over languages with a neutral letter, the first progress on the
  Conjecture in more than 20 years.  To show the characterization, lower bounds
  against polynomial-size depth-3 Boolean circuits with constant top fan-in are
  developed.  The heart of the combinatorial argument resides in studying how
  positions within a language are determined from one another, a technique of
  independent interest.
\end{abstract}



\keywords{automata theory, first-order logic, descriptive complexity, circuit
  complexity.}

\maketitle

\section{Introduction}

\paragraph{Circuits and regular languages.} Since the works of Barrington and
Thérien~\cite{barrington89,barringtont88}
in the early 1990s, regular languages have emerged as the backbone of
small-depth circuit complexity.  Despite being the most elementary class of
languages, regular languages seem to embody the intrinsic power of circuit
classes.  Under a suitable notion of reduction, a lot of relevant circuit
classes even admit \emph{complete} regular languages (this is at the heart of
Barrington's
Theorem~\cite{barrington89}).
In addition, it seems that each natural restriction of small-depth circuits
defines its \emph{own} class of regular languages. 

More precisely, consider the
following families:
\begin{itemize}
\item \(\AC^0_i\) is the class of Boolean circuits families of depth~\(i\) and polynomial size,
\item \(\ACC^0_i\) is the same as \(\AC^0_i\) but with additional modulo gates, 
\item \(\TC^0_i\) is the same as \(\AC^0_i\) but with additional \emph{threshold}
  gates (more than half of the inputs are 1).
\end{itemize}
The hierarchy in depth of \(\AC^0\) is known to be
strict~\cite{sipser83}, but this is open for the other
classes (for \(\TC^0\) this is known up to depth
3~\cite{hajnalmpst93}).  \mcnote{What about ACC?
  Balaji is unsure.} It is however conjectured that each of these hierarchies is
strict and that strictness can always be witnessed by \emph{regular} languages;
in other words, as mentioned, each of these classes is conjectured to have its
\emph{own} subset of regular languages.

Over the past 30 years, abundant literature has provided a sophisticated
toolset to show separation (and sometimes decidability) of classes of
\emph{regular} languages.  This toolset relies on algebraic objects that
characterize the complexity of regular languages; this object satisfies some
properties iff the language belongs to some class.  It is thus tempting to,
first, characterize the regular languages that belong to a circuit class, and,
second, separate the classes that arise.  This paper is focused on that first
step, for a specific class of circuits (loosely speaking, \(\AC^0_3\)).

\paragraph{Logic.}  Wishing to provide a guiding light, Straubing~\cite{straubing94}
presented in a succinct but beautiful way the links between circuit complexity
and automata theory, and formulated a conjecture on\ldots{} logics\tznote{What is meant by this?}.  Indeed,
circuits themselves are ill-formed for statements of the form ``a circuit family
\(\AC^0_i\) recognizes a regular language iff it has this specific shape.''  Logic
came to the rescue by giving a descriptional tool for circuits.  This started
with the work of Barrington et al.~\cite{barringtoncst92}
and Straubing~\cite{straubing91} who showed that
\(\AC^0\) ($= \bigcup_i \AC^0_i$) is equivalent to first-order logic.  This means that
for any \(\AC^0\) circuit family, there is a first-order formula, with quantifiers
over positions, that recognizes the same language.  For instance, over the alphabet \(A=\{a,
b,c \}\) the language \(A^*ab^*aA^*\), which is in \(\AC^0_2\), can be written as:
\begin{align*}
  (\exists x, y)[&x < y \land a(x) \land a(y) \land \mbox{} \tag{find the 2 \(a\)s}\\
           &(\forall z)[x < z < y \rightarrow b(z)]] \tag{everything in between is a \(b\)}
\end{align*}

In this formula, we used the \emph{numerical predicate} \(<\); numerical
predicates speak about the \emph{numerical value} of positions but not their
contents.  The class of (languages recognized by) formulas \(\FO[\arb]\) is that
of first-order formulas where we allow \emph{any} numerical predicate (even
undecidable ones!).  The aforementioned characterization reads:
\(\AC^0 = \FO[\arb]\).  Extensions of this tight relationship between circuits and
logics exist for \(\ACC^0,\) \(\TC^0,\) and other classes (see~\cite{straubing94}).

Equipped with this, the characterization of the regular languages of \(\AC^0\) is
given by this striking statement~\cite{straubing91}:
\[\FO[\arb] \cap \Reg = \FO[\reg],\]
where \reg is the set of numerical predicates \(<, +1,\) and divisibility by
constants.  Straubing notes \emph{``this phenomenon appears to be quite
  general''} and postulates that for well-behaved logics \(\cL\), it holds that
\[\cL[\arb] \cap \Reg = \cL[\reg].\]
This is known as the Straubing Property for \(\cL\) (simply \emph{Central
  Conjecture} in~\cite{straubing94}) and it is explicitly stated for the logics
\(\Sigma_i\).  Straubing~\cite[p.~169]{straubing94} explains: \emph{``This has the look of a
  very natural principle. It says, in effect, that the only numerical predicates
  we need in a sentence that defines a regular language are themselves
  recognized by finite automata.''}

The Program for Separation in circuit complexity then becomes:
\begin{enumerate}
\item Identify a logic that corresponds to the circuit class,
\item Prove the Straubing Property for the class,
\item Show separation over regular languages.
\end{enumerate}

In this work, we apply this approach to the lower reaches of the \(\AC^0_i\)
hierarchy.  Step 1 in the Program is covered by a result of~\cite{macielpt00}; we
will be focusing on the subset \(\Sigma_i\) of \(\FO\) of formulas with \(i\) quantifier
alternations, starting with an existential one.  For instance, the above
formula is in \(\Sigma_2[<, +1]\).  The Straubing Properties for these logics are very much
open: it is known to hold for \(\Sigma_1\)~\cite{straubing94} and its Boolean
closure~\cite{macielpt00,straubing01}, but no progress has been made on Straubing
Properties since the end of the 1990s.

We will be mostly focusing on languages with a \emph{neutral letter}, this means
that the languages will admit a letter \(c\) that can be added or removed from
words without impacting their membership in the language.  This is usually not a
restriction to the Program for Separation, since it is conjectured that circuit
classes are separated by regular languages with neutral letters.  See, in
particular, the fascinating survey by
Koucký~\cite{koucky09}.  Write \(\NL\) for the set
of languages that have a neutral letter.  The \emph{Neutral} Straubing Property
is that for a logic \(\cL\), it holds that:
\[\cL[\arb] \cap \Reg \cap \NL = \cL[<] \cap \NL.\]

\paragraph{Contributions.}  We show that \(\Sigma_2\) has the Neutral Straubing
Property and that \(\Delta_2 = (\Sigma_2 \cap \Pi_2)\) has the Straubing Property, where
\(\Pi_2\) is defined as \(\Sigma_2\) but with \(\exists\) and \(\forall\) swapped.  Consequently, we
exhibit some natural regular languages that separate \(\AC^0_2\) and \(\AC^0_3\).

\paragraph{Related work.}  In \cite{grosshansms17}, the authors study the model of
so-called programs over DA.  This defines another subclass of $\AC^0$, but it is
not known to have any equivalent characterization in terms of circuits. It is in
particular not known to be equivalent to $\Sigma_2[\arb] \cap \Pi_2[\arb]$ or to the
two-variable fragment of $\FO[\arb]$, classes that we will explore in
\Cref{sec:consequences}.  They give a precise description of the regular
languages computable with programs over DA, but their proof uses the algebraic
structure of DA, which is not available in our setting.

The class \(\Sigma_2[<]\) corresponds to the second level of the Straubing-Thérien
hierarchy, an extensively studied hierarchy that is closely tied to the famous
dot-depth hierarchy.  A major open problem is to decide whether a given regular
language belongs to a given level of this hierarchy.  See the survey of
Pin~\cite{pin17} for a modern account on this topic and the recent major
progress of Place and Zeitoun~\cite{placez19}.

\paragraph{Organization of the paper.}  We introduce circuits, logic, and a bit
of algebra in \Cref{sec:preliminaries}.  In \Cref{sec:limits}, we introduce
so-called \emph{limits}, a classical tool in devising lower bounds against
depth-3 circuits.  In \Cref{sec:flower}, we present a simple lower bound against
a language in \(\Sigma_2\); this serves as both a warm-up for the main proof and to
identify the difficulties ahead.  In \Cref{sec:main}, we prove our main result,
that is, the Neutral Straubing Property for \(\Sigma_2\).  In \Cref{sec:consequences},
we derive some consequences of our main result, in particular the Straubing
Property for \(\Delta_2\).  We conclude in \Cref{sec:concl}.

\section{Preliminaries}\label{sec:preliminaries}

We assume familiarity with regular languages, logic, and circuits, although we
strive to keep this presentation self-contained.  We write \(\Reg\) for the class
of regular languages.

\paragraph{Word, languages, neutral letters.}  Following Lothaire~\cite{lothaire97}, a
word \(v = a_1a_2\ldots a_n\), with each \(a_i\) in an alphabet \(A\), is a \emph{subword}
of a word \(v\) if \(v\) can be written as \(v = v_0a_1v_1a_2\cdots a_nv_n\), with each
\(v_i\) in \(A^*\).  We say that a language \(L\) \emph{separates} \(X\) from \(Y\) if
\(X \subseteq L\) and \(L \cap Y = \emptyset\).  A language \(L\) has a \emph{neutral letter} if there is
a letter \(c\) such that \(u\cdot c\cdot v \in L \Leftrightarrow u\cdot v\in L\) for all words
\(u, v\).  We write \(\NL\) for the class of languages with a neutral letter.

\paragraph{Monoids, ordered monoids, morphisms.}  A \emph{monoid} is a set
equipped with a binary associative operation, denoted multiplicatively, with a
identity element.  An \emph{idempotent} of \(M\) is an element \(e \in M\) that
satisfies \(e^2=e\).
For an alphabet \(A\), the set \(A^*\) is the free monoid generated by \(A\), its
identity element being the empty word.  An \emph{ordered monoid} is a monoid
equipped with a partial order \(\leq\) compatible with the product, i.e.,
\(x \leq y\) implies \(xz \leq yz\) and \(zx \leq zy\) for any \(x,y,z\in M\).  Any monoid can be
seen as an ordered monoid, using equality as order.  An \emph{upper set} of an
ordered monoid \(M\) is a set \(S\) such that for any \(x,y \in M\), if
\(x \in S\) and \(x \leq y\) then \(y \in S\).  A morphism is a map
\(h\colon M \to N\) satisfying \(h(ab) = h(a)h(b)\) and \(h(1) = 1\), with
\(a, b \in M\) and \(1\) denoting the identity element of \(M\) and \(N\).

\paragraph{Monoids as recognizers.}  An ordered monoid \(M\) \emph{recognizes} a
language \(L \subseteq A^*\) if there is a morphism \(h\colon A^* \to M\) and an upper set
\(P\) of \(M\) such that \(L = h^{-1}(P)\).  The \emph{ordered syntactic monoid} of
\(L\) is the smallest ordered monoid that recognizes \(L\); it is finite iff
\(L\) is regular, in which case it is unique.\footnote{The ordered syntactic
  monoid is usually defined as the quotient of \(A^*\) by the so-called syntactic
  order induced by \(L\); the definition proposed here is
  equivalent~\cite[Corollary~4.4]{pin95} and allows to introduce one fewer
  concept.}

\paragraph{Logic.}  We work with first-order logics recognizing languages.  For
instance, the formula over the alphabet \(\{a, b\}\)
\[(\forall x)(\exists y)[\predmod_2(x) \lor (y = x + 1) \land (a(x) \leftrightarrow
b(y))]\]%
asserts that, in a given word \(w\), for every position \(x\) there is another
position \(y\) such that either \(x\) is divisible by 2 (\(\predmod_2\)) or \(y\) is just
after \(x\) and \(w\) has different letters at \(x\) and \(y\).  The predicates
\(\predmod_2\) and \(+1\) are examples of \emph{numerical predicates}, i.e., they only
speak about the numerical positions, not the contents of the input word.  The
predicates \(a(\cdot)\) and \(b(\cdot)\) are the \emph{letter predicates}.

In this paper, first-order logics are specified by restricting two aspects:
\begin{itemize}
\item The number of quantifier alternations.  We write \(\Sigma_2\) for the subset of
  first-order formulas that can be written as
  \((\exists x_1, x_2, \ldots)(\forall y_1, y_2, \ldots)[\phi]\) with \(\phi\) a quantifier-free formula, that
  is, \(\Sigma_2\) is the set of formulas starting with an existential quantifier and
  alternating \emph{once}.  In \Cref{sec:delta}, we will briefly mention
  \(\Pi_2\) (defined as \(\Sigma_2\) but with \(\exists\) and \(\forall\) swapped) and
  \(\Delta_2\), the set of formulas that are equivalent to \emph{both} a
  \(\Sigma_2\) and a \(\Pi_2\) formula (this describes, a priori, fewer languages than
  \(\Sigma_2\) or \(\Pi_2\)).
\item The \emph{numerical predicates} allowed.  Except for a quick detour in
  \Cref{sec:delta}, we will only be using two sets: \(<\), that is, the sole order
  relation (e.g., \(\sto\)) and \(\arb\) the set of \emph{all} predicates (e.g.,
  \(\sta\)).  In the latter set, there would be predicates asserting that two
  positions are coprime or that a position encodes an halting Turing machine,
  there is no restriction whatsoever.
\end{itemize}

The language of a formula is the set of words that satisfy it.  We commonly
identify a class of formulas with the class of languages they recognize.

\paragraph{Languages of a syntactic ordered monoid.} Let \(M\) be an ordered
monoid.  For any element \(x \in M\), the \emph{up-word problem} for \(M\) and
\(x\) is the following language over \(M\) seen as an alphabet:
\[\{w \in M^* \mid w \text{ evaluates in $M$ to an element} \geq x\}.\] In some precise
sense, a regular language has the same complexity as the hardest of the up-word
problems for its ordered syntactic monoid; in the case on \(\sta\), we can
state:
\begin{lemma}[{From \cite[Lemma 5.6]{pin95}}]\label{lem:regtomon}
  A regular language with a neutral letter is in \sta iff all the up-word
  problems for its ordered syntactic monoid are in \sta.
\end{lemma}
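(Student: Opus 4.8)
The plan is to exhibit two length-controlled substitutions between $A^*$ and $M^*$ and to push a \sta formula through each of them, relying on closure of \sta under finite Boolean combinations and under inverse block substitutions. Fix the syntactic morphism $h\colon A^*\to M$ and the accepting upper set $P$, so that $L=h^{-1}(P)$, and write $\pi\colon M^*\to M$ for the evaluation morphism, so that the up-word problem is $U_x=\pi^{-1}(\uparrow x)$. Since $c$ is neutral, syntacticity forces $h(c)=1$, hence appending or deleting $c$'s never changes $h$-images. Choosing for each $m\in M$ a word with $h$-image $m$ and padding it with $c$'s to a common length $K$ yields blocks $\hat u_m\in A^*$ with $|\hat u_m|=K$. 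These give the two maps I will use: the letter-to-letter map $\sigma\colon A^*\to M^*$ sending $a$ to $h(a)$ (length-preserving), and the uniform block substitution $\tau\colon M^*\to A^*$ sending $m$ to $\hat u_m$ (length multiplied by $K$). By construction $\pi\circ\sigma=h$ and $h\circ\tau=\pi$.

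For the direction ``all up-word problems in \sta $\Rightarrow L\in\sta$'', I would write the upper set as the finite union of its principal upper sets, $P=\bigcup_{x\in P}\uparrow x$, so that $L=h^{-1}(P)=\bigcup_{x\in P}\sigma^{-1}(U_x)$. Each $\sigma^{-1}(U_x)$ is obtained from a \sta formula for $U_x$ by the letter substitution $m(z)\mapsto\bigvee_{h(a)=m}a(z)$; this rewrites only the letter predicates, leaving the quantifier prefix and the numerical predicates untouched, so the result is again in \sta. It then suffices to note that \sta is closed under finite unions: after renaming apart, the existential blocks merge, and a disjunction of universal subformulas over disjoint variable tuples obeys $\forall\vec y\,\phi\lor\forall\vec y'\,\phi'\equiv\forall\vec y\,\vec y'\,(\phi\lor\phi')$ (the two failure-witnesses are chosen independently), restoring the $\exists^*\forall^*$ shape. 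This direction uses no neutral letter.

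For the converse, fix $x\in M$; I would express membership of $\pi(w)$ in $\uparrow x$ through the syntactic order. Since $M$ is the \emph{ordered syntactic} monoid of $L$, its order is the syntactic order: $x\le m$ holds exactly when every context $(s,t)$ with $sxt\in P$ also satisfies $smt\in P$. Taking $m=\pi(w)$ and ranging over the finitely many contexts unfolds this into
\[
  \pi(w)\ge x \iff \bigwedge_{(s,t)\,:\,sxt\in P}\big(s\,\pi(w)\,t\in P\big).
\]
Realizing $s$ and $t$ by single blocks $\hat u_s,\hat u_t$, each conjunct reads $\tau(s\cdot w\cdot t)\in L$, i.e.\ it is the pullback along $\tau$ of the fixed \sta formula $\Phi$ defining $L$, evaluated on $w$ bordered by two constant letters. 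Pulling $\Phi$ back along $\tau$ is where the work lies: an output position is a pair (input position, offset in $\{0,\dots,K-1\}$), so each quantifier over output positions is replaced by a single quantifier over input positions together with a bounded choice of offset — a finite disjunction under $\exists$ and a finite conjunction under $\forall$, both absorbed into the quantifier-free matrix. The $\exists^*\forall^*$ prefix is thereby preserved, and the final finite conjunction over contexts keeps us inside \sta.

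The main obstacle, and the sole place the neutral letter is essential, is precisely this pullback. Because numerical predicates can read positions but not letters, the map sending an output position to its (input position, offset) pair must be a function of positions alone; the common block length $K$, obtained by padding with the neutral letter, is exactly what makes this map a fixed affine arithmetic on positions — expressible with arbitrary numerical predicates — rather than something depending on the letters of $w$. The same uniformity is what prevents the offset bookkeeping from introducing new quantifier alternations.
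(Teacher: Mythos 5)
The paper never proves this lemma: it is imported (in adapted form) from Pin~\cite[Lemma~5.6]{pin95}, where it is established in a general variety framework, so there is no in-paper argument to compare against. Your proposal is a correct, self-contained reconstruction specialized to \sta, and both halves rest on exactly the right closure properties. For ``word problems \(\Rightarrow\) language'' you use \(P=\bigcup_{x\in P}\uparrow x\), the length-preserving letter substitution realizing \(h\) through \(\pi\), and closure of \(\Sigma_2\) under finite disjunction via the renaming identity \(\forall\vec y\,\phi\lor\forall\vec y'\,\phi'\equiv\forall\vec y\,\vec y'\,(\phi\lor\phi')\); as you note, this direction needs no neutral letter. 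For ``language \(\Rightarrow\) word problems'' you correctly use that the order of the \emph{syntactic} ordered monoid is the context order (this is where syntacticity, i.e.\ surjectivity of \(h\) and \(L=h^{-1}(P)\), is indispensable), reduce each up-word problem to a finite conjunction of pullbacks of \(L\), and isolate the crux: neutrality forces \(h(c)=1\), so all blocks \(\hat u_m\) can be padded to a common length \(K\), making the output-to-input position correspondence a fixed affine map on positions --- the only kind of recoding that arbitrary numerical predicates can absorb. Two details are glossed over, but both are routine: (i) output positions falling in the border blocks \(\hat u_s,\hat u_t\) correspond to no position of \(w\), so each translated quantifier needs finitely many extra constant cases (the letters there are fixed and their positions are determined by the input length, which \arb predicates can see); (ii) the finite disjunction over offsets attached to \emph{existential} output variables lands between the \(\exists\)-block and the \(\forall\)-block, not in the quantifier-free matrix as you claim, so restoring the \(\exists^*\forall^*\) shape requires one more application of the very renaming identity you proved in the first direction. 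Neither point threatens the argument. What your route buys over the paper's citation is an explicit inventory of the \sta closure properties involved (inverse letter substitutions, finite positive Boolean combinations, inverse length-multiplying substitutions with hardwired borders), which is essentially the same toolkit the paper later deploys, on circuits, to show that \(\Delta_2[\arb]\cap\Reg\) is an lm-variety.
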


\paragraph{Circuits.}  We will study languages computed by families of
constant-depth, polynomial-size circuits consisting of unbounded fan-in
\(\land\)- and \(\lor\)-gates.  A circuit with \(n\) inputs
\(x_1, x_2, \ldots, x_n\) in some alphabet \(A\) can query whether any input contains any
given letter in \(A\).  The \emph{depth} of the circuit is the maximal number of
gates appearing on a path from an input to the output.  A circuit family is an
infinite set \((C_n)_{n\geq 0}\) where the circuit \(C_n\) has \(n\) inputs and one
output gate; a word \(w\) is deemed accepted if the circuit \(C_{|w|}\) outputs 1
when \(w\) is placed as input.  We identify classes of circuits with the class of
languages they recognize.

For a circuit \(C\) we visualize the inputs on \emph{top} and the one output gate
at the \emph{bottom}.\footnote{The literature has been flip-flopping between
  putting the inputs at the bottom or at the top, with a semblance of stability
  for ``top'' achieved in the late 90s.  This explains that some references
  mention ``bottom fan-in.''}  The \emph{top fan-in} of \(C\) is the maximum
fan-in of the gates that receive an input letter directly.  We say that \(C\) is a
\EAE circuit if it is layered with a bottom OR gate with AND gates as inputs,
each of these having OR gates as inputs.

We let \(\Sigma_2\) be the class of families of \EAE circuits of polynomial-size and
constant top fan-in.  In the literature, this circuit family is also called
\(\Sigma\Pi\Sigma(k)\) in~\cite{dingrg17} and \(\Sigma_2^{\text{poly},k}\)
in~\cite{caich98}, where \(k\) is the top fan-in.  It is a
subclass of \(\AC^0_3\), the class of polynomial-size, depth-3 circuits, a class
we will discuss in \Cref{sec:conscirc}.  The name ``\(\Sigma_2\) circuit family'' is
all the more justified that:
\begin{lemma}[{From~\cite[Proposition~11]{macielpt00}}]
  A language is recognized by a \(\Sigma_2[\arb]\) formula iff it is recognized by a
  \(\Sigma_2\) circuit family.
\end{lemma}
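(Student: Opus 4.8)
The plan is to prove both implications of the equivalence directly, by translating syntactic objects back and forth. The conceptual key throughout is that \emph{constant} top fan-in is exactly what reconciles the three gate layers of an \EAE circuit (\(\lor\)-\(\land\)-\(\lor\)) with the two quantifier blocks \(\exists^*\forall^*\) of a \(\Sigma_2\) formula: since the innermost \(\lor\)-layer has bounded fan-in \(k\), a disjunction over its inputs can be written out explicitly and absorbed into the quantifier-free matrix (equivalently, handled by boundedly many read-variables), rather than requiring a third, unbounded quantifier block. Non-uniformity on both sides matches up automatically: arbitrary numerical predicates correspond to circuit families whose wiring may depend on \(n\) in an unrestricted way.

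First I would treat the formula-to-circuit direction. Take a formula \(\exists x_1\cdots x_a\,\forall y_1\cdots y_b\,\phi\) with \(\phi\) quantifier-free, and fix the input length \(n\). The block \(\exists \bar{x}\) becomes an output \(\lor\)-gate of fan-in \(n^a\) (one input per assignment of \(\bar{x}\) to positions), and \(\forall \bar{y}\) becomes a \(\land\)-gate of fan-in \(n^b\). Once all of \(\bar{x},\bar{y}\) are fixed to concrete positions, every numerical predicate in \(\phi\) evaluates to a \emph{constant} (numerical predicates depend only on positions, not on the word), so after substitution \(\phi\) is a fixed-size Boolean function of the finitely many letter-predicate literals \(\sigma(p)\). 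Putting this function in CNF yields an \(\land\) of clauses, each a disjunction of at most \((a+b)\cdot|A|\) literals; negated literals \(\lnot\sigma(p)\) are rewritten as \(\bigvee_{\tau\neq\sigma}\tau(p)\), still of bounded width. Merging the clause-level \(\land\) into the \(\forall\)-level \(\land\) leaves a single middle layer, and the clauses become the top \(\lor\)-gates. The result is an \EAE circuit of polynomial size whose top fan-in is a constant depending only on \(\phi\).

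The harder direction is circuit-to-formula. Given an \EAE family of polynomial size and constant top fan-in \(k\), index the middle \(\land\)-gates by a tuple \(\bar{x}\) of \(O(1)\) positions (there are polynomially many of them) and each top \(\lor\)-gate feeding a given \(\land\)-gate by a further tuple \(g\) of \(O(1)\) positions. The output \(\lor\) is true iff some \(\land\)-gate is true, which becomes \(\exists\bar{x}\); the chosen \(\land\)-gate is true iff every \(\lor\)-gate feeding it is true, which becomes a universal block. To read the word at the positions consulted by a top gate—positions that depend on the universally chosen \(g\) and so cannot be pre-guessed—I fold the at most \(k\) read positions into the universal block and guard them by a numerical predicate encoding the wiring. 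Concretely, the formula is
\[
  \exists \bar{x}\;\forall g\;\forall z_1\cdots z_k\;\Big[\mathrm{Valid}(\bar{x},g,z_1,\dots,z_k)\;\rightarrow\;\bigvee_{i=1}^{k}\ \bigvee_{\sigma\in A}\big(\mathrm{Lit}_{i,\sigma}(\bar{x},g)\wedge \sigma(z_i)\big)\Big],
\]
where \(\mathrm{Valid}\) is the numerical predicate asserting that \(g\) indexes a top \(\lor\)-gate feeding \(\land\)-gate \(\bar{x}\) and that \(z_1,\dots,z_k\) are exactly the positions it consults, and \(\mathrm{Lit}_{i,\sigma}\) asserts that the \(i\)-th input of that gate is the positive literal \(\sigma\) (negations and missing inputs are handled as above and by leaving \(\mathrm{Lit}_{i,\sigma}\) false). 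For each relevant gate there is a unique \(z\)-tuple satisfying \(\mathrm{Valid}\); on it the consequent demands the gate be true, and for all other tuples the implication is vacuous. This is a genuine \(\Sigma_2[\arb]\) formula with a quantifier-free matrix.

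The main obstacle is this last encoding: keeping the quantifier structure to exactly two blocks while the top gates read \(\bar{y}\)-dependent positions. This is precisely where constant top fan-in is indispensable—it bounds the number of read-variables \(z_1,\dots,z_k\) by a constant, so they fit in a single fixed formula, whereas an unbounded innermost \(\lor\) would force a third quantifier block. The remaining work is routine bookkeeping: the arithmetic of polynomial gate counts versus \(O(1)\)-tuple indexing, the treatment of negated literals and sub-maximal fan-in by padding, and the verification that \(\mathrm{Valid}\) and \(\mathrm{Lit}\) are legitimate numerical predicates (they speak only about positions and the fixed, possibly non-uniform wiring, never about letters).
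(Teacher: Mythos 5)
The paper does not actually prove this lemma: it imports it verbatim from Maciel--P\'eladeau--Th\'erien (their Proposition~11), so there is no in-paper argument to compare yours against. Judged on its own, your proof is the standard translation between quantifier blocks and gate layers, and it is essentially sound. The formula-to-circuit direction is correct as written: once all quantified variables are instantiated, numerical predicates become constants, the matrix collapses to a constant-size Boolean function of letter predicates, and its CNF merges into the \(\land\)-layer, leaving constant-width clauses as the top \(\lor\)-gates. The circuit-to-formula direction also rests on the right idea --- universally quantifying the \(\leq k\) read positions \(z_1,\dots,z_k\) and selecting the correct tuple with a wiring predicate \(\mathrm{Valid}\) is exactly how one avoids a third quantifier block, and your observation that constant top fan-in is what makes this possible is the heart of the matter.

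One step in your displayed formula would fail as written, and the fix changes truth values rather than bookkeeping: if a tuple \(\bar{x}\) does \emph{not} index an actual \(\land\)-gate (which happens whenever the number of \(\land\)-gates is not exactly the number of tuples), then \(\mathrm{Valid}(\bar{x},g,\bar{z})\) is false for every \(g,\bar{z}\), the guarded implication is vacuously true, and your formula accepts \emph{every} word. You must either index the \(\land\)-gates surjectively (map the surplus tuples to some fixed gate, so every \(\bar{x}\) names a gate) or conjoin a numerical guard \(\mathrm{Real}(\bar{x})\) into the matrix, i.e.\ \(\exists\bar{x}\,\forall g\,\forall\bar{z}\,[\mathrm{Real}(\bar{x})\wedge(\mathrm{Valid}(\bar{x},g,\bar{z})\rightarrow\cdots)]\), which stays \(\Sigma_2\). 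A second, genuinely minor point: \(\mathrm{Valid}\) and \(\mathrm{Lit}_{i,\sigma}\) depend on the input length \(n\) (they encode the wiring of \(C_n\)). Under the paper's permissive reading of \(\arb\) this is fine, and even under length-independent predicates it can be repaired by existentially guessing the maximum position \(m\) (forcing \(\forall z\,[z\leq m]\) inside the universal block) and treating \(m\) as an extra argument of the predicates; it is worth saying explicitly which convention you use.
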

We will often exploit this equivalence without pointing at this lemma.

\paragraph{A decidable characterization of \(\Sigma_2[<]\).}  
Let \(x, y \in M\); we say that \(y\) is a \emph{subword} of \(x\) if there are two
words \(w_x, w_y \in M^*\) that evaluate, using \(M\)'s product, to \(x\) and
\(y\), respectively, and \(w_y\) is a subword of \(w_x\).  The following is a
characterization of the languages in \(\sto\) that is due to Pin and
Weil~\cite{pinw97}, and we use a version
proposed by Bojańczyk:
\begin{theorem}[From \cite{bojanczyk09}]\label{thm:eqs}
  A regular language is in \sto iff its ordered syntactic monoid \(M\) is such
  that for any \(x, y \in M\) such that \(x\) is an idempotent\footnote{This is
    usually written by letting \(x\) be any element, and considering
    \(x^\omega\), which is the unique idempotent that is a power of \(x\); we simplify
    the presentation slightly by simply requiring \(x\) to be an idempotent.}
  and \(y\) a subword of \(x\), it holds that
  \[x \leq x y x\enspace.\]
\end{theorem}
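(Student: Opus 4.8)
The plan is to prove the two implications separately, with the syntactic data fixed once and for all: let \(h\colon A^*\to M\) be the syntactic morphism of \(L\) and \(P\) the upper set with \(L=h^{-1}(P)\), so that \(x\le y\) in \(M\) means exactly that \(pxq\in P\Rightarrow pyq\in P\) for all \(p,q\in M\). The forward direction (every \sto language has a syntactic monoid satisfying \(x\le xyx\)) I expect to be the routine one; the converse (the identity forces \(L\in\sto\)) is where the real work lies.

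\emph{Necessity.} Since \(M\) is syntactic, the inequality \(x\le xyx\) unfolds into a statement about words. Fix representatives \(u,v\in A^*\) with \(h(u)=x\), \(h(v)=y\) and \(v\) a subword of \(u\). Because \(x\) is idempotent, every power \(u^c\) represents \(x\), and \(\alpha\,u^a v\,u^b\,\beta\) represents \(p\,xyx\,q\) for all \(a,b\ge 1\); so it suffices to prove, for one conveniently large \(N\) and all words \(\alpha,\beta\),
\[
\alpha\,u^{2N}\,\beta\in L\;\Longrightarrow\;\alpha\,u^{N} v\, u^{N}\,\beta\in L .
\]
Write \(\Phi=\exists x_1\cdots x_q\,\forall y_1\cdots y_\ell\,\psi\) for a \sto sentence defining \(L\), and assume the left-hand word lies in \(L\); fix existential witnesses \(\bar a\) in it. I transport them to the right-hand word by the offset-preserving embedding that skips the inserted block \(v\), obtaining \(\bar a'\). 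To verify \(\forall\bar y\,\psi\) of \(\bar a'\), I take any tuple \(\bar b'\) in \(\alpha u^{N} v u^{N}\beta\) and must produce \(\bar b\) in \(\alpha u^{2N}\beta\) so that \((\bar a,\bar b)\) and \((\bar a',\bar b')\) have the same order type and letters; then \(\psi(\bar a',\bar b')\) follows from \(\psi(\bar a,\bar b)\), which holds since \(\bar a\) is a witness.

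The only delicate point is where to send the points of \(\bar b'\) that land inside the inserted \(v\); the rest are handled by the inverse offsets. Here I choose \(N\gg q+\ell\). Since at most \(q\) of the \(2N\) copies of \(u\) in \(\alpha u^{2N}\beta\) carry an existential witness, I can pick a witness-free copy \(C_r=u\) and \emph{declare} all witnesses to its left, resp.\ right, to constitute the left, resp.\ right, block (each block has at most \(q\le N\) occupied copies, so it embeds order- and letter-preservingly into the corresponding \(u^{N}\)). The reserved copy \(C_r\) then sits strictly between all left and all right marked positions in \(\alpha u^{2N}\beta\); because \(v\) is a subword of \(u=C_r\), I embed the \(v\)-part of \(\bar b'\) into \(C_r\). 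The combined map is order- and letter-preserving, the partial isomorphism holds, and the displayed implication—hence \(x\le xyx\)—follows.

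\emph{Sufficiency.} I will use the normal form for \sto: every \sto sentence is equivalent to a finite disjunction of formulas that existentially fix an ordered skeleton \(x_1<\cdots<x_k\) with prescribed letters and impose on each gap (and the two ends) a \emph{universal} constraint, namely that the corresponding factor avoids a fixed finite set of subwords. Equivalently, \sto is the class of finite unions of marked products \(K_0a_1K_1\cdots a_kK_k\) in which each \(K_i\) is a Boolean combination of conditions ``does not contain subword \(s\)''. Writing \(L=\bigcup_{m\in\min(P)}h^{-1}(\uparrow m)\), it suffices to treat a single principal up-set \(h^{-1}(\uparrow m)\). For such an \(m\) the plan is to guess, through the existential block, a bounded factorization \(w=w_0a_1w_1\cdots a_kw_k\)—of bounded length via a Simon-type factorization-forest argument—whose image multiplies to an element \(\ge m\), each \(w_i\) mapping into an idempotent, and then to have the universal block certify that each \(w_i\) avoids the appropriate forbidden subwords. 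The identity \(x\le xyx\) is precisely the absorption property that makes this description \emph{sound}: inside a factor whose value is an idempotent \(x\), inserting any material that is a subword \(y\) of \(x\) can only move the value up in \(M\) and hence cannot leave the upper set \(P\); this guarantees that ``skeleton present and gaps avoid the forbidden subwords'' already forces \(h(w)\ge m\).

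The main obstacle I anticipate is exactly this sufficiency direction: pinning down the correct bounded factorization and the precise finite list of forbidden subwords per gap, and checking that the resulting \sto formula captures \(h^{-1}(\uparrow m)\) on the nose rather than a strictly larger or smaller language. This is the combinatorial heart of the Pin–Weil characterization, and the cleanest way to organize it is through factorization forests together with repeated use of the identity to collapse idempotent blocks; the necessity direction above, by contrast, reduces to a single well-chosen Duplicator strategy.
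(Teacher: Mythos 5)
Two things to flag, one about each direction.

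First, note that the paper does not prove this theorem at all: it is imported verbatim from Bojańczyk (and ultimately Pin--Weil), so you are attempting to reprove a known, genuinely deep result rather than to reconstruct an argument the paper contains. That matters for assessing your ``sufficiency'' direction, because that direction \emph{is} the content of the Pin--Weil theorem, and your text does not prove it. You state a normal form for \(\Sigma_2[<]\) (unions of marked products whose gaps are subword-avoidance conditions), sketch a plan (guess a bounded factorization via factorization forests, let the universal block certify the gaps), and then explicitly defer the core: you never show that a word possessing the skeleton and satisfying the gap constraints must evaluate to an element \(\geq m\) (soundness), nor that every word of \(h^{-1}({\uparrow}m)\) admits such a skeleton of bounded length (completeness). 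Asserting that \(x \leq xyx\) ``is precisely the absorption property that makes this description sound'' names the role the identity should play, but establishing it is exactly the combinatorial heart you acknowledge skipping. As it stands this direction is a research plan, not a proof.

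Second, the necessity direction has a real gap at the very point you call delicate. Once you re-declare the witness embedding (left-of-\(C_r\) witnesses into the first \(u^N\), right-of-\(C_r\) witnesses into the second), the map is no longer offset-preserving, so ``the rest are handled by the inverse offsets'' is ill-defined; and the universal tuple \(\bar b'\) is chosen \emph{after} \(\bar a'\) is fixed, so it can place positions in target copies lying strictly between the images of two \emph{adjacent} occupied source copies --- e.g.\ if source copies 1 and 2 both carry witnesses and you map them to target copies 1 and 5, universal positions in target copies 2, 3, 4 have no source copies to land in, and neither \(\alpha\) nor the interior of a single copy can absorb them with the right letters and order. Fixing this requires both a constrained choice of the copy-assignment (gap sizes capped, preserving adjacency) and choosing \(C_r\) inside a run of at least \(\ell\) witness-free copies on each side, none of which appears in your write-up. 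The clean repair is to abandon the fixed target \(\alpha u^N v u^N \beta\) altogether: since \(L = h^{-1}(P)\), it suffices to place \emph{some} word of image \(p(xyx)q\) in \(L\). So after seeing the witnesses, pick any witness-free copy \(C_r\) with \(2 \leq r \leq 2N-1\) and \emph{replace it by} \(v\); the word \(\alpha u^{r-1} v u^{2N-r}\beta\) still maps to \(p\cdot xyx\cdot q\), the witnesses transport by identity/shift, and every universal position falling in the \(v\)-block embeds into \(C_r\) by the subword embedding with no order conflicts, precisely because \(C_r\) carries no witnesses. That one observation eliminates the capacity problem your version runs into.
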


Equivalently, this could be worded over languages directly: a regular language
\(L\) is in \sto iff for any three words \(w_1, w_2, w_3\) that map to the same
idempotent in \(M\) and \(v\) a subword of \(w_2\), if a word
\(w_0\cdot w_1w_2w_3\cdot w_4\) is in \(L\), for some words \(w_0, w_4\), then so is
\(w_0\cdot w_1vw_3\cdot w_4\).  We will allude to this wording in some proofs in the
Consequences section (\Cref{sec:conscirc}).

\section{Limits and lower bounds against \sta}\label{sec:limits}

We present a tool that has been used several times to show lower bounds against
depth-3 circuits, in particular in \cite{hastadjp95}.
The following definition is attributed to Sipser therein:
\begin{definition}[From~\cite{sipser84}]
  Let \(F\) be a set of words, all of same length \(n\), and \(k > 0\). A
  \emph{\(k\)-limit} for \(F\) is a word \(u\) of length \(n\) such that for any set of
  \(k\) positions, a word in \(F\) matches \(u\) on all these positions.  In symbols,
  \(u\) satisfies:
  \[(\forall P \subseteq [n] . |P| = k)(\exists v \in F)(\forall p \in P)\left[u_p = v_p\right].\]
\end{definition}

Naturally, we will be interested in \(k\)-limits that fall outside of \(F\),
otherwise finding \(k\)-limits is trivial.  In fact, we will consider sets \(F\)
that are included in a subset of a target language, and find \(k\)-limits outside
of the target language itself.  We include a short proof of the following
statement for completeness and because it makes the statement itself more
readily understandable.
\begin{lemma}[{From~\cite[Lemma~2.2]{hastadjp95}}]\label{lem:klimitimpliesbad}
  Let \(L\) be a set of words all of same length \(n\) and \(C\) be a
  \EAE circuit that accepts at least all the words of \(L\).  Let \(k\) be the top
  fan-in of \(C\) and \(s\) its size.

  Assume there is a subset \(L' \subseteq L\) such that for any
  \(F \subseteq L'\) of size at least \(|L'|/s\) there is a \(k\)-limit for
  \(F\) that does not belong to \(L\).  Then \(C\) accepts a word outside of \(L\).  The
  hypothesis can be represented graphically as:
\end{lemma}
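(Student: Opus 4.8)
The plan is to exploit the OR-of-ANDs structure of a \EAE circuit together with the defining property of a \(k\)-limit, in three moves: isolate a single AND gate by an averaging argument, find a \(k\)-limit for the words that this gate accepts, and then show that the limit alone already fools that gate. Throughout, the key structural fact is that \(C\) is layered as a bottom OR of AND gates, each AND gate being a conjunction of top OR gates that read the inputs directly.

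First I would write the circuit as \(C = G_1 \lor G_2 \lor \cdots \lor G_m\), where each \(G_i\) is one of the AND gates feeding the bottom OR gate and \(m \leq s\). Since \(C\) accepts every word of \(L' \subseteq L\), each such word is accepted by at least one \(G_i\); by the pigeonhole principle some fixed gate, say \(G\), accepts a subset \(F \subseteq L'\) of at least \(|L'|/m \geq |L'|/s\) words. This is the crucial point where we pass from the whole circuit to a single conjunction: a \(k\)-limit cannot be expected to fool all of \(C\) at once, because the top OR may route distinct inputs through distinct AND gates, but it will fool one gate, which is all we need.

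Next, applying the hypothesis to this \(F\) (whose size meets the \(|L'|/s\) threshold) yields a \(k\)-limit \(u\) for \(F\) with \(u \notin L\). It remains to argue that \(G(u) = 1\), so that \(C(u) = 1\) while \(u \notin L\). Here I would use locality: \(G = \bigwedge_j O_j\) is an AND of top OR gates, and because the top fan-in is \(k\), each \(O_j\) is a disjunction of at most \(k\) input queries and therefore depends on a set \(P_j\) of at most \(k\) positions. By the definition of a \(k\)-limit (after padding \(P_j\) to exactly \(k\) positions, which cannot change the value of \(O_j\)), there is a word \(v \in F\) agreeing with \(u\) on every position of \(P_j\). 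Since \(v \in F\) is accepted by \(G\), in particular \(O_j(v) = 1\); as \(O_j\) reads only positions in \(P_j\), on which \(u\) and \(v\) coincide, we get \(O_j(u) = O_j(v) = 1\). This holds for every \(j\), hence \(G(u) = 1\) and \(C\) accepts \(u \notin L\).

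The argument is short, and the real content lies in the definition of a limit rather than in any calculation; the one step deserving care is the separation of concerns just described, namely recognizing that the averaging step must precede the limit step, since the limit property is only strong enough to defeat a single AND gate (each of whose top OR gates is \(k\)-local) and not the full disjunction. A minor technical point to dispatch is the padding of each \(P_j\) to size exactly \(k\), to match the quantifier \(|P| = k\) in the definition, together with the harmless assumption \(n \geq k\).
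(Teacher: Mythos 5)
Your proof is correct and follows essentially the same route as the paper's: a pigeonhole/averaging step to isolate one AND gate accepting a fraction \(|L'|/s\) of \(L'\), then the \(k\)-limit hypothesis applied to that set, and finally the \(k\)-locality of each top OR gate to conclude that the isolated AND gate (and hence the circuit) accepts the limit \(u \notin L\). Your explicit handling of padding each position set to size exactly \(k\) is a minor technicality the paper glosses over, but the argument is the same.
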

\begin{center}
  \begin{tikzpicture}
    \node (l') {\(\exists L'\)};
    \node[align=center,anchor=north] (f) at ($(l'.north)+(2.2cm,0)$) {\(\forall F\)\\\(|F| \geq |L'|/s\)};
    \node[align=center,anchor=north] (u) at ($(f.north)+(2.2cm,0)$) {\(\exists u\)\\\(k\)-lim.~for \(F\)};

    \path let \p1 = (f) in coordinate (center) at (\x1, \y1 + 1.8cm);
    
    \node (l) at ($(center) + (-2cm, 1cm)$) {\(L\)};

    \draw (l) -- (center);
    \draw[fill=white,thick] (center) ellipse (2cm and 1cm);

    \draw (l') -- (center);
    \draw[fill=white,thick] (center) ellipse (1.5cm and 0.75cm);

    \draw (f) -- (center);
    \draw[fill=white,thick] (center) ellipse (0.5cm and 0.4cm);

    \draw (u) -- +(0.7cm, 1.5cm) node [draw,fill=black,circle,minimum
    size=3pt,inner sep=0pt]
    {};
  \end{tikzpicture}
\end{center}
\begin{proof}
  At the bottom of \(C\), we have an OR gate of fan-in at most \(s\) that receives
  the result of some AND gates.  By counting, one of these AND gates should
  accept a subset \(F\) of \(L'\) of size at least \(|L'|/s\); we will now focus on
  that gate.  Let \(u \notin L\) be the \(k\)-limit for \(F\) that exists by hypothesis.
  Consider an OR gate that feeds into the AND gate under consideration.  This OR
  gate checks the contents of a subset \(P \subseteq [n]\) of \(k\) positions of the input.
  By hypothesis, there is a word \(v\) in \(F\) that matches \(u\) on all the
  positions in \(P\), hence the OR gate cannot distinguish between \(u\) and
  \(v\) and must output 1 (true) as \(v\) must be accepted.  This holds for all the
  OR gates feeding into the AND gate under consideration, hence the AND gate
  must accept \(u\), and so does \(C\).
\end{proof}

\begin{corollary}\label{cor:lb}
  Let \(L\) be a language and write \(L_n\) for the subset of words of length
  \(n\) in \(L\).  Assume that for any \(k, d \in \bbN\), there is an
  \(n \in \bbN\) and a subset \(L' \subseteq L_n\) such that every subset
  \(F \subseteq L'\) of size at least \(|L'|/n^d\) admits a \(k\)-limit outside of
  \(L\).  Then \(L\) is not in \(\sta\).
\end{corollary}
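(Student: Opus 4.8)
The plan is to prove the contrapositive: assuming $L \in \sta$, I will exhibit, at one well-chosen length $n$, a word accepted by the purported recognizer but lying outside $L$, contradicting its correctness. First I would translate the logical assumption into a circuit one. If $L \in \sta$, then by the equivalence between $\Sigma_2[\arb]$ formulas and $\Sigma_2$ circuit families quoted above, $L$ is recognized by a family $(C_n)_{n \ge 0}$ of \EAE circuits of constant top fan-in $k$ and polynomial size $s(n) \le c\,n^{d_0}$. The only features of this family I need are that each $C_n$ accepts exactly the length-$n$ fragment $L_n$ of $L$, has top fan-in at most $k$, and has size $s(n)$.

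Next I would instantiate the hypothesis with the parameters extracted from the circuit. Setting $d := d_0 + 1$ guarantees $s(n) \le c\,n^{d_0} \le n^{d}$ for every $n \ge c$. Applying the assumption to this $k$ and this $d$ supplies a length $n$---taken large enough that $s(n) \le n^d$---together with a subset $L' \subseteq L_n$ such that every $F \subseteq L'$ of size at least $|L'|/n^d$ admits a $k$-limit outside $L$.

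I would then check that this is precisely the situation governed by \Cref{lem:klimitimpliesbad}, read with the length-$n$ language $L_n$ playing the role of $L$ and $C := C_n$ playing the role of $C$. Indeed $C_n$ accepts all of $L_n$, its top fan-in is at most $k$, and its size is $s = s(n) \le n^d$. Because $s \le n^d$, any $F$ with $|F| \ge |L'|/s$ also satisfies $|F| \ge |L'|/n^d$, so the hypothesis hands it a $k$-limit $u \notin L$; since $u$ has length $n$, lying outside $L$ is the same as lying outside $L_n$, exactly the form the lemma requires. The lemma then yields a word accepted by $C_n$ yet outside $L_n$, contradicting $L(C_n) = L_n$. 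Hence no such family exists and $L \notin \sta$.

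The step demanding the most care is the reconciliation of the two size thresholds: the lemma is phrased in terms of the true circuit size $s$, whereas the hypothesis is phrased in terms of $n^d$. To bridge them one must choose $d$ strictly larger than the degree of the size polynomial so that $n^d$ eventually dominates $s(n)$ (absorbing the leading constant $c$), and correspondingly take the length $n$ furnished by the hypothesis large enough that $s(n) \le n^d$ genuinely holds. Beyond this bookkeeping there is no new combinatorics: all the genuine content sits in \Cref{lem:klimitimpliesbad}, and the corollary is merely its quantified repackaging over all choices of top fan-in and polynomial degree.
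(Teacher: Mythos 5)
Your proof is correct and takes essentially the same route as the paper's: assume a \(\Sigma_2\) circuit family for \(L\) with top fan-in \(k\) and polynomial size, instantiate the hypothesis of the corollary with those parameters, and feed the circuit \(C_n\) at the furnished length \(n\) into \Cref{lem:klimitimpliesbad} to produce an accepted word outside \(L_n\). One small caution on your bookkeeping step: the hypothesis furnishes only \emph{some} \(n\), so you cannot ``take it large enough'' that \(s(n) \leq n^d\); the clean fix is to absorb the leading constant \(c\) into the exponent by choosing \(d = d_0 + \lceil \log_2 c \rceil + 1\), which makes \(s(n) \leq n^d\) hold for every \(n \geq 2\) regardless of which \(n\) the hypothesis supplies (the paper sidesteps this entirely by assuming the size bound is exactly \(n^d\)).
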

\begin{proof}
  For a contradiction, assume there is a \(\Sigma_2\) circuit family for \(L\), with
  top fan-in \(k\) and size \(n^d\).  Let \(n\) be the value provided by
  the hypothesis, then the circuit \(C\) for \(L_n\) satisfies the hypotheses of
  \Cref{lem:klimitimpliesbad}, hence \(C\) accepts a word outside of \(L\), a
  contradiction.
\end{proof}

\section{Warm-up: \(K = (ac^*b+c)^* \notin \sta\)}\label{sec:flower}

In this section, we follow the approach of Håstad, Jukna, and
Pudlák~\cite{hastadjp95} to show the claim of the section
title.  We present it in a specific way that will help us stress the commonality
and the differences of this approach with our main proof.

To show the claim of the section title, we consider a slightly different
language.  For any \(n\) that is a perfect square, we let \(\good_n\) be the set of
words of length \(n\) over \(\{a, b\}\) of the following shape:
\begin{center}
  \begin{tikzpicture}
    \coordinate (ur) at (5.5cm, 0cm) {};
    \coordinate (ll) at (0cm, -0.5cm) {};
    \draw (ur) rectangle (ll);
    \draw[latex-latex,transform canvas={yshift=0.2cm}] (0,0) -- node [above]
    {\(n\)} (ur);

    \def\l{2cm}

    \foreach \t in {(0,0),{($(ur)-(\l,0)$)}} {
      \draw let \p1 = (ll) in
        \t rectangle +(\l, \y1)
         node[pos=.5,font=\scriptsize] (c) {\(b\cdots bab\cdots b\)};
      \draw[latex-latex] let \p1 = (c) in
        ($(\x1-\l/2,2*\y1-0.2cm)$) -- node[below,font=\small] {\(\sqrt{n}\)} +(\l,0);
    }
    
    \path let \p1 = (ll) in
      node at ($(ur)!.5!(0,0) + (0,\y1/2)$) {\(\ldots\)};
  \end{tikzpicture}
\end{center}
In words, a word is in \(\good_n\) if it can be decomposed into \(\sqrt{n}\)
\emph{blocks} of length \(\sqrt{n}\), such that each of them has exactly one \(a\).
We let \(\good = \bigcup_{n}\good_n\).

\begin{lemma}\label{lem:ktogood}
  If \(K\) is in \sta, then so is \good.
\end{lemma}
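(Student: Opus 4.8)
The plan is to exhibit a reduction from \(\good\) to \(K\) realized by a position-wise \emph{projection}, and to transport a hypothetical \(\sta\) algorithm for \(K\) back to \(\good\) along it. Since \(\sta\) coincides with the class of \(\Sigma_2\) circuit families (by the cited equivalence with \(\Sigma_2[\arb]\) formulas), I would carry out the whole argument in the circuit model, where closure under projections amounts to rewiring input gates and is therefore transparent: it preserves the depth-3 shape, the constant top fan-in, and the polynomial size all at once.

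First I would record the key feature of \(K\): because \(c\) is a neutral letter, a word \(x \in \{a,b,c\}^*\) lies in \(K\) iff erasing all its \(c\)'s yields a word in \((ab)^*\), i.e.\ iff its \(a/b\)-subsequence strictly alternates, starting with \(a\) and ending with \(b\). Then, for a perfect square \(n\) and \(w \in \{a,b\}^n\) split into blocks \(B_1,\dots,B_{\sqrt{n}}\), I would define \(f(w)\) over \(\{a,b,c\}\) by replacing, inside every block, each \(a\) by \(a\) and each \(b\) by \(c\), and then appending a single fresh letter \(b\) at the end of each block, so that \(|f(w)| = n + \sqrt{n}\). Erasing the \(c\)'s from \(f(w)\) produces \(a^{m_1}b\,a^{m_2}b\cdots a^{m_{\sqrt{n}}}b\), where \(m_i\) is the number of \(a\)'s in \(B_i\). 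The crux is the elementary observation that this string lies in \((ab)^*\) iff every \(m_i\) equals \(1\): a block with no \(a\) contributes a \(b\) adjacent to the previous block's \(b\) (a forbidden \(bb\)), while a block with two or more \(a\)'s contributes a forbidden \(aa\). Combined with the characterization of \(K\), this yields \(w \in \good_n \iff f(w) \in K\).

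Finally, I would observe that \(f\) is a projection: every output position carries either a constant letter (the appended \(b\)'s) or a fixed relabeling of a single input letter. I would therefore take a purported \(\Sigma_2\) circuit \(C_m\) for \(K\) at length \(m = n+\sqrt{n}\) and rewire each input gate through \(f\), replacing a query ``position \(j\) is \(\ell\)'' by a constant when \(j\) is an appended position and by the corresponding query on \(w\) otherwise. The result is a \(\Sigma_2\) circuit for \(\good_n\) of no larger size and the same top fan-in; ranging over square \(n\) (and rejecting at non-square lengths) gives a \(\Sigma_2\) circuit family for \(\good\), hence \(\good \in \sta\).

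The main obstacle I expect is not the reduction but getting the gadget exactly right so that \emph{both} failure modes — too few and too many \(a\)'s in a block — break the \((ab)^*\) pattern; this is precisely what forces the per-block appended \(b\) together with the mapping of block-internal \(b\)'s to the neutral letter, and it is the step where the neutral letter does the real work. The remaining care is routine bookkeeping: verifying that the rewiring preserves the polynomial size bound and the constant top fan-in, so that the transported family is genuinely a \(\sta\) witness.
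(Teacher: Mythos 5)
Your proof is correct and is essentially the paper's own argument: your map \(f\) (relabel block-internal \(b\)'s to the neutral letter \(c\), append one \(b\) per block) is exactly the paper's ``expansion,'' your case analysis of too many/too few \(a\)'s per block matches its correctness argument (including the first-block edge case, which the paper spells out explicitly), and the transfer to circuits via rewiring of input gates is the same wires-only reduction. No gap to report.
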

\begin{proof}
  This is easier to see on circuits, so assume there is a \(\Sigma_2\) circuit family
  for \(K\).  For \(n\) a perfect square, we design a circuit for \(\good_n\).  On any
  input, we convert the \(b\)'s to \(c\)'s and insert a \(b\) every \(\sqrt{n}\)
  positions; we call this the \emph{expansion} of the input word.  For instance,
  with \(n = 9\), the input \(abb\,bab\,bba\) is expanded to
  \(accb\,cacb\,ccab\).  Clearly, if the input word is in \(\good_n\), then its
  expansion is in \(K\).  Conversely, if a block of the input had two \(a\)'s, the
  expansion will not add a \(b\) in between, so the expansion is not in \(K\);
  similarly, if a block of the input contains only \(b\)'s, it will be expanded to
  only \(c\)'s sandwiched between two \(b\)'s, and the expansion will not be in
  \(K\) (in the case where the block containing only \(b\)'s is the first one, the
  expansion starts with \(c\cdots cb\), again putting the expansion outside of~\(K\)).

  Thus a circuit for \(\good_n\) can be constructed by computing the expansion
  (this only requires wires and no gates), then feeding that expansion to a
  circuit for \(K\).  If the circuit family for \(K\) were in \(\Sigma_2\), so would the
  circuit family for \(\good\).
\end{proof}

We use \Cref{cor:lb} to show that \(\good \notin \sta\).  Let then \(k, d \in \bbN\).  The
value of \(L'\) in \Cref{cor:lb} will simply be \(\good_n\), and we show:
\begin{lemma}\label{lem:bigtoklim}
  If \(n\) is large enough, any subset \(F \subseteq \good_n\) with
  \(|F| > k^{\sqrt{n}}\) has a \(k\)-limit outside of \(\good_n\).  This holds in
  particular if \(|F| \geq |\good_n|/n^d\).
\end{lemma}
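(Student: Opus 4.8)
The plan is to treat this as a purely combinatorial statement about the encoding of \(\good_n\). First I would identify each word of \(\good_n\) with the tuple \((p_1,\dots,p_{\sqrt n})\) recording, block by block, the position of its unique \(a\), so that \(F\) becomes a family of more than \(k^{\sqrt n}\) such tuples. Next I would record what a \(k\)-limit looks like in this encoding. A candidate word \(u\) must place at most one \(a\) per block — two \(a\)'s in a single block are immediately fatal, since querying those two positions is matched by no word of \(\good_n\) — and \(u\notin\good_n\) exactly when some block of \(u\) carries no \(a\) at all. Writing \(u_i\in[\sqrt n]\cup\{\bot\}\) for the \(a\)-position (or its absence) in block \(i\), a word \(v\in F\) matches \(u\) on a query \(P\) iff \(v\) has its \(a\) at \(u_i\) in every block where \(u\)'s \(a\)-position is queried, and avoids \(P\) in every other block. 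The goal is thus to produce such a \(u\) with at least one \(\bot\)-block that defeats every query of size \(k\).

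The construction I would use is iterative. Start from the all-\(b\) word (\(u_i=\bot\) everywhere) and a surviving set \(F'=F\), maintaining the invariant \(|F'|>k^{|R|}\), where \(R\) is the set of blocks still set to \(\bot\). At each step I test whether the current \(u\) is a \(k\)-limit for \(F'\); the blocks already carrying an \(a\) are automatically matched by every \(v\in F'\) (all surviving words agree there), so the only obstruction is a query \(P\) whose positions inside \(R\)-blocks hit every surviving word, \ie each \(v\in F'\) has its \(a\) inside \(P\) in some \(R\)-block. Such a query uses at most \(k\) positions, so by pigeonhole one position \((i_1,j_1)\) with \(i_1\in R\) is hit by at least a \(1/k\) fraction of \(F'\); I would then commit \(u_{i_1}:=j_1\), restrict \(F'\) to the words with their \(a\) at \(j_1\) in block \(i_1\), and remove \(i_1\) from \(R\). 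This shrinks \(|F'|\) by a factor at most \(k\) and \(|R|\) by one, so the invariant survives.

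The step I expect to be the real obstacle is arguing that the process stops while \(R\) is still nonempty, \ie that we genuinely obtain an empty block rather than drifting all the way into a single word of \(\good_n\). This is exactly where \(|F|>k^{\sqrt n}\) is spent: if the process ran until \(R=\emptyset\), the invariant would force \(|F'|>k^0=1\), yet with every block fixed all surviving words coincide, so \(|F'|\le 1\), a contradiction. Hence the process halts at some state with \(R\neq\emptyset\) where \(u\) is a \(k\)-limit for \(F'\subseteq F\) (a fortiori for \(F\)) and carries a \(\bot\)-block, so \(u\notin\good_n\), as required. The naive temptation is to fix \emph{all but one} block to a single popular value and leave the last one free; I would avoid it because it is too rigid (already for two blocks and \(k=1\) it can fail), whereas testing the all-\(b\) candidate and fixing a block only when a query forces it is what keeps the argument honest.

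Finally, the ``in particular'' clause is a short estimate: since \(|\good_n|=(\sqrt n)^{\sqrt n}=n^{\sqrt n/2}\), we have \(|\good_n|/n^d=n^{\sqrt n/2-d}\), whose logarithm \((\sqrt n/2-d)\ln n\) dominates \(\sqrt n\,\ln k\) for large \(n\); thus \(|\good_n|/n^d>k^{\sqrt n}\) eventually, and any \(F\) of that size meets the hypothesis of the first part.
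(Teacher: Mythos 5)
Your proof is correct, and it takes a genuinely different (more self-contained) route than the paper. The paper encodes each word of \(\good_n\) as the \(\sqrt{n}\)-element set of its \(a\)-positions and invokes the Flower Lemma as a black box with set size \(s=\sqrt{n}\) and \(p=k+1\) petals: the core \(Y\) of the resulting flower is strictly smaller than \(\sqrt{n}\), so the word with \(a\)'s exactly on \(Y\) has an all-\(b\) block, and the flower condition (no set of size \(k\) meets every coreless petal) makes it a \(k\)-limit. Your committed positions play exactly the role of that core, and your ``no obstruction query of size \(k\)'' halting condition is exactly the flower condition; your iterative step --- when a \(k\)-query hits every survivor, pigeonhole a popular position, restrict \(F'\) (losing at most a factor \(k\)), and recurse --- is essentially an inlined proof of the Flower Lemma, with the invariant \(|F'|>k^{|R|}\) replacing its induction. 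What the paper's route buys is brevity and a connection to standard sunflower-type machinery; what yours buys is a fully elementary argument and a transparent accounting of where the hypothesis \(|F|>k^{\sqrt{n}}\) is spent (a factor \(k\) per committed block, which must outlast all \(\sqrt{n}\) blocks --- this is your terminal contradiction \(1<|F'|\le 1\) at \(R=\emptyset\)). All the supporting details check out: a \(k\)-limit for \(F'\subseteq F\) is a fortiori one for \(F\); the committed blocks are matched automatically by every survivor, so a single witness word matches the whole query; and the closing estimate \(|\good_n|/n^d=n^{\sqrt{n}/2-d}>k^{\sqrt{n}}\) for large \(n\) is the same computation as in the paper.
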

\begin{proof}
  We rely on the \emph{Flower Lemma}, a combinatorial lemma that is a relaxation
  of the traditional Sunflower Lemma.  We first need to introduce some
  vocabulary.

  We consider families \(\cF\) containing sets of size \(s\) for some~\(s\).  The
  \emph{core} of the family is the set \(Y = \bigcap_{S \in \cF} S\).  The \emph{coreless}
  version of \(\cF\) is the family \(\cF_Y = \{S \setminus Y \mid S \in \cF\}\).  A set
  \(S\) \emph{intersects} a family \(\cF\) if all the sets of \(\cF\) have a nonempty
  intersection with \(S\).  Finally, a \emph{flower} with \(p\) petals is a family
  \(\cF\) of size \(p\) with core \(Y\) such that any set which intersects
  \(\cF_Y\) is of size at least \(p\).
  \begin{nest}
    \begin{lemma}[Flower Lemma~{\cite[Lemma 6.4]{jukna11}}]
      Let \(\cF\) be a family containing sets of cardinality \(s\) and
      \(p \geq 1\) be an integer.  If \(|\cF| > (p -1)^s\), then there is a subfamily
      \(\cF' \subseteq \cF\) that is a flower with \(p\) petals.
    \end{lemma}
  \end{nest}

  To apply this lemma, consider the mapping \(\tau\) from words in \(\good_n\) to
  \(2^{[n]}\) that lists all the positions where a word has an \(a\).  For instance,
  with \(n = 9\), \(\tau(bba\,abb\,bab) = \{3, 4, 8\}\).  For any word \(w\) in
  \(\good_n\), \(\tau(w)\) is of size \(\sqrt{n}\).  We let \(\cF = \{\tau(w) \mid w \in F\}\).
  
  We now apply the lemma with \(s = \sqrt{n}\) and \(p = k + 1\).  Since \(|\cF| =
  |F|\), we can apply the lemma on \(\cF\) and obtain a subfamily \(\cF'\) that is a
  flower with \(k+1\) petals.  Let \(Y\) be its core.  Consider the word \(u\) of
  length \(n\) over \(\{a, b\}\) which has \(a\)'s exactly at the positions in \(Y\).
  Then:
  \begin{itemize}
  \item \(u\) is outside of \(\good_n\).  Indeed, \(|Y| < \sqrt{n}\), since it is the
    intersection of distinct sets of size \(\sqrt{n}\).  Hence one of the blocks
    of \(u\) will contain only \(b\)'s, putting it outside of \(\good_n\).
  \item \(u\) is a \(k\)-limit.  Let \(P\) be a set of \(k\) positions, we will find a
    word that is mapped to \(\cF'\) that matches \(u\) on \(P\).  If a position in \(P\)
    points to an \(a\) in \(u\), then every word in \(\cF'\) has an \(a\) at that
    position (by construction, since this position would belong to the core
    \(Y\)).  So we assume that \(P\) contains only positions on which \(u\) is \(b\).
    Since \(|P|\) is \(k\), it cannot intersect \(\cF'\), hence there is a set \(S \in
    \cF'\) such that \(S \cap P = \emptyset\).  The set \(S\) is thus \(\tau(w)\) for a word \(w \in F\)
    that has a \(b\) on all positions in \(P\).  This word \(w\) thus matches \(u\) on
    \(P\), concluding the proof of the main statement.
  \end{itemize}

  The ``in particular'' part is implied by the fact that, for \(n\) large enough:
  \[\frac{|\good_n|}{n^d} = \frac{\sqrt{n}^{\sqrt{n}}}{n^d} \geq
  k^{\sqrt{n}}.\]
\end{proof}

\begin{theorem}\label{thm:k}
  The language  \(K = (ac^*b+c)^*\) is not in \sta.
\end{theorem}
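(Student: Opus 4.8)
The plan is to assemble the three preceding results into a short contrapositive argument, so that the theorem itself requires essentially no new work. First I would observe that by \Cref{lem:ktogood} it suffices to prove \(\good \notin \sta\): that lemma states \(K \in \sta \Rightarrow \good \in \sta\), whose contrapositive is exactly \(\good \notin \sta \Rightarrow K \notin \sta\). So the whole theorem reduces to establishing that \good is not in \sta.

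Next I would invoke \Cref{cor:lb} with \(L = \good\). That corollary asks, for every \(k, d \in \bbN\), to exhibit some \(n \in \bbN\) and a subset \(L' \subseteq \good_n\) such that every \(F \subseteq L'\) of size at least \(|L'|/n^d\) admits a \(k\)-limit outside of \good. The natural choice is to take \(L' = \good_n\) itself. The ``in particular'' clause of \Cref{lem:bigtoklim} tells us precisely that, for \(n\) large enough (with the threshold depending on \(k\) and \(d\)), any such \(F\) has a \(k\)-limit outside of \(\good_n\); this is exactly the witness the corollary demands, once the two notions of ``outside'' are reconciled.

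The only point meriting care is the translation from ``outside of \(\good_n\)'' (what \Cref{lem:bigtoklim} delivers) to ``outside of \good'' (what \Cref{cor:lb} requires). Since a \(k\)-limit for \(F\) has the same length \(n\) as the words of \(F\), and \(\good_n\) is by definition the set of all length-\(n\) words of \good, a word of length \(n\) lies outside \good exactly when it lies outside \(\good_n\). Hence the \(k\)-limit produced is genuinely outside \good, and the hypotheses of \Cref{cor:lb} are satisfied for every \(k, d\) by choosing \(n\) sufficiently large.

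Combining these, \Cref{cor:lb} yields \(\good \notin \sta\), and the contrapositive of \Cref{lem:ktogood} then gives \(K \notin \sta\), as desired. I expect no real obstacle in this final step: all the combinatorial content has already been discharged into \Cref{lem:ktogood} (the circuit-level reduction from \(K\) to \good), \Cref{lem:bigtoklim} (the Flower-Lemma construction of limits), and \Cref{cor:lb} (the limit-to-lower-bound machinery). The remaining work is purely bookkeeping—matching the quantifier order \((\forall k, d)(\exists n, L')\) of the corollary and confirming that the length-\(n\) words of \good coincide with \(\good_n\).
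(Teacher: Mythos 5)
Your proposal is correct and matches the paper's proof exactly: the paper also applies \Cref{cor:lb} to \good with \(L' = \good_n\), invokes \Cref{lem:bigtoklim} for the bad \(k\)-limits, and concludes via the contrapositive of \Cref{lem:ktogood}. Your extra remark reconciling ``outside of \(\good_n\)'' with ``outside of \(\good\)'' (immediate since the limit has length \(n\)) is a detail the paper leaves implicit, but it is the same argument.
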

\begin{proof}
  \Cref{cor:lb} applied on \(\good\), using \Cref{lem:bigtoklim}, implies that
  \(\good \notin \sta\).  \Cref{lem:ktogood} then asserts that \(K\) cannot be in
  \(\sta\) either.
\end{proof}

\section{The regular languages with a neutral letter not in \sto are not in
  \sta}\label{sec:main}

The proof of the statement of the section title is along two main steps:
\paragraph{\Cref{sec:nostoandstatosep}.} We will start with a language with a
neutral letter \(L \notin \sto\).  Since it is not in \sto, there are
\(x, y \in M\) that falsify the equations of \Cref{thm:eqs}.  We use these witnesses
to build a up-word problem \(T\) of the ordered syntactic monoid of \(L\) and show
that it lies outside of \(\sta\), implying that \(L \notin \sta\) by \Cref{lem:regtomon}.

To show \(T\) out of \sta, we identify (\Cref{sec:somewords}) a subset of
well-behaved words of \(T\), and make some simple syntactical changes (in
\Cref{sec:tgoodtogood}) on them so that they look like words in \good, in a
similar fashion as the ``expansions'' of \Cref{lem:ktogood}.  The argument used
in \Cref{lem:ktogood} then needs to be refined, as we do not have that any word
outside of \good comes from a word outside of \(T\).  We will define a set \bad of
words that look like words in \good except for \emph{one} block that contains
only \(b\)'s; \Cref{lem:ktogood} is then worded as: if \(K\) is in \sta, then there
is a \sta language that separates \good from \bad (\Cref{lem:ttogoodbad}).

\paragraph{\Cref{sec:nogoodbadsepinsta}.} We show that no language of \sta can
separate \good from \bad.  We thus need to provide a statement in the spirit of
\Cref{lem:bigtoklim}.  We first write good and bad words in a succinct
(``packed'') way, as words in \([\sqrt{n}]^{\sqrt{n}}\), the \mbox{\(i\)-th} letter
being some value \(v\) if the original word had the \(a\) of its \(i\)-th block in
position \(v\) (\Cref{sec:packed}).  We then translate the notion of \(k\)-limit to
packed words (\Cref{lem:packedlimit}).  Finally, we provide a measure of how
diverse a set of (packed) good words is (\Cref{def:entail}), and show that such
a set is either not diverse and small (\Cref{lem:tangledimpliessmall}), or
diverse and admits a \(k\)-limit (\Cref{lem:nottangledimpliesklim}).  Our term for
``not diverse'' will be \emph{tangled}, referring to the fact that there is a
strong correlation between the contents of positions within words.

\subsection{If a language not in \sto is in \sta, we can separate \good from
  \bad with a language in \sta}\label{sec:nostoandstatosep}

\subsubsection{Target up-word problem and some of its words}\label{sec:somewords}

For the rest of this section, let \(L \subseteq A^*\) be a regular language with a neutral
letter that lies outside of \sto and let \(M\) be its ordered syntactic monoid.
Since \(L\) is not in \sto, there are elements \(x, y \in M\) such that
\(x \not\leq x y x\) with \(x\) an idempotent and \(y\) a subword of \(x\).  Let
\(T\) be the up-word problem of \(M\) for \(x\).  Clearly, any word of
\(M^*\) that evaluates to \(x y x\) does \emph{not} belong to \(T\).

Naturally, \(y\) is thus also a subword of \(x\); this provides us with words
that evaluate to \(x\) and \(y\) of the shape:
\[x_1 y_1 \dots x_t y_t \text{ evaluates to } x, \quad y_1 \dots y_t \text{
  evaluates to } y,\] with each
\(x_i\) and \(y_i\) in \(M\).  (Note that we can use the identity element of
\(M\) as needed to ensure we have as many \(x_i\)'s as \(y_i\)'s.)

Let \(n \in \bbN\) be a large enough perfect square (``large enough'' only depends
on \(t\), the length of the word for \(y\)).  We define \(\sqrt{n}+1\) words of length
\(\sqrt{n}+t\) over \(M\):
\begin{itemize}
\item For \(i \in [\sqrt{n}]\),
  \[x^{(i)} = \left(1^{i-1}x_11^{\sqrt{n}-i} \cdot y_1\right) \cdots
  \left(1^{i-1}x_t1^{\sqrt{n}-i}\cdot y_t\right).\]%
  Here \(1\in M\) is the neutral element of \(M\).  Note that these words evaluate to
  \(x\).
\item Additionally, we consider the word
  \(1^{\sqrt{n}}y_1\dots 1^{\sqrt{n}}y_t\), which evaluates to \(y\), and we simply
  write \(y\) for it.  Note that \(y\) can be obtained by removing all the letters
  \(x_j\) from any word \(x^{(i)}\).
\end{itemize}

Call \emph{\(T\)-good} a concatenation of \(\sqrt{n}\) words of the form
\(x^{(i)}\) sandwiched between two words \(x^{(1)}\) (the 1 is arbitrary), and
\emph{\(T\)-bad} a word obtained by changing, in a \(T\)-good word, \emph{exactly
  one} of the words \(x^{(i)}\) to \(y\) (but for the \(x^{(1)}\) at the beginning and
end).  By construction, any \(T\)-good word evaluates to \(x\) in \(M\), so belongs
to \(T\), while any \(T\)-bad word evaluates to \(x yx\), hence does not belong to
\(T\).  Note that if we had switched \emph{two} blocks of a \(T\)-good word to
\(y\), we would not be able to say whether it belonged to \(T\) or not.

\subsubsection{\(T\)-good, \(T\)-bad to \good and \bad}\label{sec:tgoodtogood}

The \(T\)-good and \(T\)-bad words contain a lot of redundant information, for
instance \(x^{(i)}\) is of length \(\sqrt{n}+t\), while all the information it
really contains is \(i \in [\sqrt{n}]\).  Recall the set \(\good_n\) of
\Cref{sec:flower} which contains all words of length \(n\) over \(\{a, b\}\) that
can be divided into \(\sqrt{n}\) blocks of length \(\sqrt{n}\), each containing a
single \(a\).  Again, we let \(\good\) be all such words, of any perfect square
length.  Define similarly \(\bad_n\) as the set of words that are like
\(\good_n\) except for \emph{one} block which has only \(b\)'s, and let
\(\bad = \bigcup_n \bad_n\).

In the next lemma, we show that we can modify, using only wires in a circuit,
words over \(\{a,b\}\) so that if they are in \good they become \(T\)-good, and if
they are in \bad they become \(T\)-bad.  This modification is simple enough that
we can take a \(\sta\) circuit family for \(T\), apply the modification at the top
of each circuit, and still have a circuit family in \(\sta\); the resulting
circuit family separates \good from \bad:
\begin{lemma}\label{lem:ttogoodbad}
  If \(\,T \in \sta\), then there is a \(\sta\) language that separates \(\good\) from
  \(\bad\).
\end{lemma}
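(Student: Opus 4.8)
The plan is to reuse the strategy of \Cref{lem:ktogood}: start from a \(\sta\) circuit family for \(T\) and hard-wire, at the input layer of each circuit, a fixed translation of words over \(\{a,b\}\) into words over \(M\), then argue that the translated family is still \(\sta\) and separates \(\good\) from \(\bad\). For \(n\) a perfect square and \(w \in \{a,b\}^n\), read \(w\) as \(\sqrt{n}\) blocks of length \(\sqrt{n}\) and send it to the \(M\)-word \(W\) that replaces a block whose unique \(a\) is at position \(i\) by \(x^{(i)}\), replaces an all-\(b\) block by \(y\), and sandwiches the \(\sqrt{n}\) resulting factors between two fixed copies of \(x^{(1)}\). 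On a word of \(\good_n\) every block has exactly one \(a\), so \(W\) is a \(T\)-good word; on a word of \(\bad_n\) exactly one factor becomes \(y\), so \(W\) is a \(T\)-bad word.

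The crucial point, exactly as in \Cref{lem:ktogood}, is that this translation costs only wires. Every literal feeding a top-level OR gate of the circuit for \(T\) is a query ``position \(p\) of the \(M\)-word carries letter \(\sigma\)''. Reading off the shape of \(x^{(i)}\) and \(y\), each position \(p\) of \(W\) is of one of two kinds: either it is a fixed \(y_j\)-position or lies in a sandwich block, so its letter does not depend on \(w\); or it is the position inside the \(x\)-part of some chunk that corresponds to block-position \(r\) of \(w\), where the letter is \(x_j\) if \(w\) carries an \(a\) at the associated global position \(q\) and \(1\) otherwise. Consequently the query ``\(\sigma\) at \(p\)'' rewrites to a truth constant, to the input literal \(a(q)\) (when \(\sigma = x_j\)), or to the input literal \(b(q)\) (when \(\sigma = 1\)) --- all legitimate inputs of a circuit over \(\{a,b\}\), introduced with no new gates and no negations. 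Rewriting the input layer this way preserves the depth, the constant top fan-in, and the polynomial size (truth constants only simplify gates); since the length of \(W\) is \(\Theta(n)\) the size stays polynomial in \(n\), and rejecting all words of non-square length turns this into a genuine \(\sta\) family.

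It remains to evaluate products in \(M\). A \(T\)-good word is \(x^{(1)} x^{(i_1)} \cdots x^{(i_{\sqrt{n}})} x^{(1)}\), a product of elements all equal to \(x\); as \(x\) is idempotent it evaluates to \(x\) and so lies in \(T\). A \(T\)-bad word replaces one interior factor by \(y\), and idempotency collapses the runs of \(x\) on either side to give the value \(x y x\), \emph{independently} of which block was spoiled; since \(x \not\leq x y x\), this value is not above \(x\) and the word is outside \(T\). Hence the rewired circuit accepts precisely the inputs whose image lies in \(T\), i.e.\ all of \(\good\) and none of \(\bad\), and by the equivalence between \(\sta\) and \(\Sigma_2\) circuit families its language is in \(\sta\). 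The one load-bearing idea, and the only place demanding care beyond routine bookkeeping, is the sandwiching by \(x^{(1)}\): without it a \(\bad\)-word whose empty block sits first or last would translate to \(y x \cdots x\) or \(x \cdots x y\), evaluating to \(y x\) or \(x y\), about which the hypothesis \(x \not\leq x y x\) says nothing; padding by one \(x\) on each side forces every \(T\)-bad word into the canonical form \(x y x\), which is exactly where the failure of the \(\sto\)-equation is detected.
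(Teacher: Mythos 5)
Your proposal is correct and follows essentially the same route as the paper's proof: the same block-by-block translation (block with its \(a\) at position \(i\) mapped to \(x^{(i)}\), all-\(b\) block mapped to \(y\), sandwiched between two copies of \(x^{(1)}\)), implemented purely by rewiring the input layer of the \(\sta\) circuit family for \(T\), so that \good maps into \(T\)-good words (which evaluate to \(x\)) and \bad into \(T\)-bad words (which evaluate to \(xyx \not\geq x\)). Your explicit justification that each query about a position of the translated word reduces to a constant or a single input literal, and your remark on why the \(x^{(1)}\)-sandwich is indispensable, only make explicit what the paper compresses into ``these operations can be done with only wires.''
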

\begin{proof}
  As in \Cref{lem:ktogood}, this is easier seen on circuits: we design a circuit
  for inputs of length \(n\) over \(\{a, b\}\) that separates \(\good\) from \(\bad\).

  Consider the first block of \(\sqrt{n}\) letters of the input.  We replicate it
  \(t\) times, with the \(i\)-th replication changing \(b\)'s to \(1\) and
  \(a\)'s to \(x_i\).  We then concatenate these and add \(y_i\) between the
  \(i\)-th and \((i+1)\)-th replication.  For instance, \(b^7ab^{\sqrt{n}-8}\) would
  turn into:
  \[\left(1^7x_11^{\sqrt{n}-8}\cdot y_1\right) \cdots \left(1^7x_t1^{\sqrt{n}-8}\cdot
    y_t\right) = x^{(8)}.\]%
  In particular, if the block were all \(b\)'s, we would obtain the word \(y\),
  which has no letter \(x_j\).

  We can do this to each block of \(\sqrt{n}\) letters, concatenate the resulting
  words, then add the word \(x^{(1)}\) at the beginning and the end.  Note that
  these operations can be done with only wires, with no gates involved.

  If the input word is in \(\good\), then the word produced is \(T\)-good, hence in
  \(T\).  If it was in \(\bad\), then the resulting word would be \(T\)-bad, hence
  would lie outside of \(T\).  This shows that the desired circuit can be
  constructed using the above wiring followed by the circuit for \(T\) for inputs
  of length \((t\sqrt{n} + t)(2 + \sqrt{n})\).  Since \(t\) is a constant and
  depends solely on \(L\), the resulting circuit is of polynomial size and of the
  correct shape.
\end{proof}

\subsection{No language in \sta separates \good from \bad}\label{sec:nogoodbadsepinsta}

Note that this section is independent from the previous one.  We will now rely
on \Cref{cor:lb} to show that any \sta language \(L\) that accepts all of \good
must accept a word in \bad.  To apply \Cref{cor:lb}, \emph{from this point
  onward} we let \(k, d \in \bbN\), and set \(n\) to be a large enough value that
depends only on \(k\) and \(d\).  The role of \(L'\) in the statement of \Cref{cor:lb}
will be played by \(\good_n\) and we will build \(k\)-limits belonging to
\(\bad_n\), which we call \emph{bad \(k\)-limits}.  The reader may check that the
statements of the forthcoming \Cref{lem:tangledimpliessmall} and
\Cref{lem:nottangledimpliesklim} conclude the proof.

\subsubsection{Packed words}\label{sec:packed}

Words in \(\good_n\) and \(\bad_n\) can be described by the position of the letter
\(a\) in each block of size \(\sqrt{n}\).  We make this explicit, by seeing
\([\sqrt{n}, \bot] = [\sqrt{n}] \cup \{\bot\}\) as an alphabet, and working with words in
\([\sqrt{n}, \bot]^{\sqrt{n}}\).  We call these words \emph{packed} and will use
Greek letters \(\lambda, \mu, \nu\) for them; we also call the letter at some position in
packed words its \emph{contents} at this position, only to stress that we are
working with packed words. We define the natural functions to pack and unpack
words:
\begin{itemize}
\item \(\unpack\colon [\sqrt{n}, \bot] \to \{a, b\}^{\sqrt{n}}\) maps \(i\) to
  \(b^{i-1}ab^{\sqrt{n}-i}\) and \(\bot\) to \(b^{\sqrt{n}}\).  This extends naturally to
  \emph{words} over \([\sqrt{n}, \bot]\).
\item \(\pack\colon \{a, b\}^* \to [\sqrt{n}, \bot]^*\) is the inverse of
  \(\unpack\).  We will use that function on sets of words too, with the natural
  meaning.
\end{itemize}

\begin{example}
  With \(n = 9\), \(\pack(abb\,bbb\,bab) = 1\bot2\), and \(\unpack(31\bot)=bba\,abb\,bbb\).
\end{example}

We can now rephrase the notion of \(k\)-limit using packed words:
\begin{lemma}\label{lem:packedlimit}
  Let \(F \subseteq \good_n\) and define \(\Phi = \pack(F)\).  If \(\mu\) is a packed word that has
  the following properties, then \(\unpack(\mu)\) is a bad \(k\)-limit for
  \(F\):
  \begin{enumerate}
  \item There is a word \(\nu \in \Phi\) that differs on a single position \(i\) with
    \(\mu\), at which \(\mu\) has contents \(\bot\):
    \[\mu_i = \bot \land (\forall j \neq i)[\nu_j = \mu_j].\]
  \item For every set \(C \subseteq [\sqrt{n}]\) of contents that contains
    \(\nu_i\) and every set \(P \subseteq [\sqrt{n}] \setminus \{i\}\) of positions such that
    \(|C| + |P| = k\), there is a word \(\lambda \in \Phi\) whose contents at position
    \(i\) is not in \(C\) and that matches \(\nu\) on \(P\):
    \[\lambda_i \notin C \land (\forall p \in P)[\lambda_p = \nu_p].\]
  \end{enumerate}
\end{lemma}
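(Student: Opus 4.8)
The plan is to verify directly that $u = \unpack(\mu)$ meets the two defining requirements of a bad $k$-limit: that $u \in \bad_n$, and that $u$ is a $k$-limit for $F$. For the first requirement, property~(1) tells us that $\mu$ agrees with some $\nu \in \Phi = \pack(F)$ everywhere except at a single position $i$, where $\mu_i = \bot$ while $\nu_i \in [\sqrt{n}]$. Since $\nu \in \pack(\good_n)$, every one of its blocks contains a single $a$, so $\mu$ describes a packed word in which every block but the $i$-th has a single $a$ and the $i$-th block has contents $\bot$, i.e.\ is all $b$'s. Unpacking, $u$ is exactly a word of $\good_n$ with one block turned into $b^{\sqrt{n}}$, which is the definition of $\bad_n$. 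This part is essentially bookkeeping about the correspondence between packed contents and blocks.

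The substance is the $k$-limit property. First I would fix an arbitrary set $Q \subseteq [n]$ of $k$ positions in the unpacked word and seek a $v \in F$ matching $u$ on all of $Q$. The natural move is to sort these positions by the block of size $\sqrt{n}$ they fall into and to translate matching on $Q$ into a condition on packed words. Positions of $Q$ that lie in block $i$ (the $\bot$-block of $\mu$) behave differently from positions in the other blocks, so I would split $Q$ accordingly. For a position $q$ in a block $j \neq i$: since $u$ and $\unpack(\nu)$ agree on block $j$, matching $u$ at $q$ is the same as matching $\unpack(\nu)$ at $q$, and this is controlled by whether the candidate word has the same contents as $\nu$ at position $j$. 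For positions of $Q$ inside block $i$: in $u$ this block is all $b$'s, so matching $u$ there is a constraint that the candidate \emph{not} place its $a$ at one of those queried spots, which is exactly a ``contents $\notin C$'' condition once I let $C$ be the set of positions-within-block-$i$ that $Q$ queries (together with $\nu_i$, to respect the hypothesis). I would let $P$ be the set of block-indices $j \neq i$ that $Q$ touches and argue $|C| + |P| \le k$, then invoke property~(2) to obtain $\lambda \in \Phi$ with $\lambda_i \notin C$ and $\lambda$ matching $\nu$ on $P$; the witness is $v = \unpack(\lambda)$.

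The step I expect to be the main obstacle is the careful accounting that turns the geometric condition ``$v$ matches $u$ on the $k$ positions of $Q$'' into the combined cardinality bound $|C| + |P| = k$ demanded by property~(2), and checking that the $\lambda$ returned really does match $u$ on all of $Q$ rather than merely on $P$ and at position $i$. Two subtleties need care. In a block $j \neq i$ touched by $Q$, knowing $\lambda_j = \nu_j$ forces $\unpack(\lambda)$ and $\unpack(\nu)$ to coincide on the \emph{whole} block, hence in particular on the queried positions of $Q$ in that block, so matching on the coarse index $j$ suffices; I must state this block-level agreement explicitly. In block $i$, I must confirm that $\lambda_i \notin C$ guarantees $\unpack(\lambda)$ has a $b$ at every queried position of $Q \cap (\text{block } i)$ -- which holds precisely because $C$ was chosen to contain exactly those within-block positions -- so that it matches the all-$b$ block of $u$ there. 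Finally, to get $|C| + |P| = k$ (padding up if $Q$ touches fewer than $k$ distinct blocks-or-positions) I would, if necessary, enlarge $C$ or $P$ by arbitrary extra elements so that the count is exactly $k$, since enlarging only strengthens the conclusion of property~(2); including $\nu_i$ in $C$ from the outset is harmless because it never corresponds to a $b$-position of $u$ in block $i$. With these checks in place, $v = \unpack(\lambda) \in F$ matches $u$ on all of $Q$, and since $Q$ was arbitrary, $u$ is a $k$-limit lying in $\bad_n$, as required.
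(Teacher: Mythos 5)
Your overall strategy is the same as the paper's (unpack/repack, split the queried positions by block, and invoke Property~2), but there is an off-by-one gap at exactly the step you labelled ``harmless.'' You put \(\nu_i\) into \(C\) from the outset and then claim \(|C| + |P| \le k\); that bound is false in general. Each of the \(k\) positions of \(Q\) contributes at most one element to \(C \cup P\) (its within-block position if it lies in block \(i\), its block index otherwise), so adding \(\nu_i\) on top can push the total to \(k+1\). This happens precisely when \(Q\) does not query the position of \(\nu\)'s \(a\) inside block \(i\) --- for instance when the \(k\) positions of \(Q\) lie in \(k\) distinct blocks different from \(i\), giving \(|P| = k\) and \(C = \{\nu_i\}\). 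In that situation Property~2, which demands \(|C| + |P| = k\) exactly, cannot be invoked, and you cannot repair the count by shrinking: removing a block index \(j\) from \(P\) loses the guarantee that \(\lambda\) matches \(\nu\) on block \(j\), and removing a content \(c \neq \nu_i\) from \(C\) allows \(\lambda_i = c\), i.e., an \(a\) at a queried position of block \(i\) where \(\unpack(\mu)\) has a \(b\).

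The missing idea is a case split, which is how the paper proceeds. If \(Q\) does not query position \(\nu_i\) of block \(i\), no appeal to Property~2 is needed: \(\unpack(\nu) \in F\) already matches \(\unpack(\mu)\) on all of \(Q\), since the two words agree outside block \(i\) by Property~1, and inside block \(i\) the word \(\unpack(\nu)\) has \(b\)'s everywhere except at the unqueried position \(\nu_i\). Property~2 is only needed in the complementary case, where \(\nu_i\) is itself one of the queried within-block positions; then \(\nu_i\) belongs to \(C\) at no extra cost, the bound \(|C| + |P| \le k\) does hold, and your padding argument finishes as written. With this case distinction inserted, the rest of your proof (the block-level agreement bookkeeping and the treatment of block \(i\)) is correct and coincides with the paper's argument.
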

\begin{proof}
  Write \(u\) for \(\unpack(\mu)\).  That \(u \in \bad\) is immediate from Property~1:
  \(u\) is but a word \(v\) of \(\good\) in which one block was set to all \(b\)'s.

  We now show that \(u\) is a \(k\)-limit.  Let \(T\) be a set of \(k\) positions, we
  split \(T\) into two sets:
  \begin{itemize}
  \item \(T'\) is the set of positions that do not belong to the \(i\)-th block of
    \(u\), that is, they do not satisfy \(\lceil p / \sqrt{n}\rceil = i\).  We let \(P\) be each
    of the elements of \(T'\) divided by \(\sqrt{n}\), that is, for any \(p \in T'\) we
    add \(\lceil p / \sqrt{n}\rceil\) to \(P\).
  \item \(T''\) is the set of positions that \emph{do} fall in the \(i\)-th block.
    Note that \(u\) only has \(b\)'s at the positions of \(T''\).  We let \(C\) be that
    set, modulo \(\sqrt{n}\), that is, for any \(p \in T''\), we add
    \(p \mathbin{\text{mod}} \sqrt{n}\) to \(C\) or \(\sqrt{n}\) if this value is \(0\).
  \end{itemize}

  First, if \(\mu_i \notin C\), then \(T\) indicates positions of \(u\) that have the same
  letter as in \(\unpack(\nu) \in F\), so a word of \(F\) matches \(u\) over
  \(T\), as required.  We thus assume next that \(\mu_i \in C\).

  Let \(\lambda \in \Phi\) be the word given by Property~2 for \(C\) and \(P\), we claim that \(w
  = \unpack(\lambda)\) matches \(u\) on the positions of \(T\), concluding the proof.

  First note that the \(i\)-th block of \(w\) has its \(a\) in a position that is not
  in \(T''\), hence \(w\) matches \(u\) on \(T''\).  Consider next any position
  \(p \in T'\) and write \(j\) for the block in which \(p\) falls (i.e.,
  \(j = \lceil p/\sqrt{n}\rceil\)).  Since \(\mu_j = \lambda_j\) by hypothesis, the
  \(j\)-th block of \(u\) and \(w\) are the same, hence \(u_p = w_p\).
\end{proof}

\subsubsection{Tangled sets of good words are small, nontangled ones have a bad
  \(k\)-limit}

Consider a \(F \subseteq \good_n\).  To find a bad \(k\)-limit for \(F\), we need a lot of
diversity in \(F\); see in particular Prop.~2 of \Cref{lem:packedlimit}.
Hence having some given contents at a given position in a word of \(\Phi\) should not
force too many other positions to have a specific value.  We make this notion
formal:

\begin{definition}\label{def:entail}
  Let \(F \subseteq \good_n\) and \(\Phi = \pack(F)\).  The \emph{entailment relation} of
  \(\Phi\) is relating sets of pairs \((i, c)\) of position/contents in words of
  \(\Phi\).  Let us say that a word and a pair position/contents \((i, c)\)
  \emph{agree} if the contents at position \(i\) of the word is \(c\), and that a
  word and a set of such pairs agree if they agree on \emph{all} the pairs.  We say
  that a set of pairs is an \(i\)-set if all its pairs have \(i\) as position.

  A set \(S\) of pairs position/contents \emph{entails} an \(i\)-set \(D\)
  if all words in \(\Phi\) that agree with \(S\) also agree with \emph{at least one} pair of
  \(D\); additionally, the position \(i\) should not appear in \(S\):
  \[
  \begin{array}{l}
    (\not\exists c)[(i, c) \in S] \land \hfill\\
    (\forall \mu \in \Phi)[(\forall (j, d) \in S)[\mu_j = d] \rightarrow \mbox{}\\
    \qquad\qquad (\exists (i, c) \in D)[\mu_i = c]].
  \end{array}\]
  
  The set \(F\) is said to be \emph{\(k\)-tangled} if for any word \(\mu \in \Phi\) and any
  position \(i \in [\sqrt{n}]\), there is an \(i\)-set of pairs of size \(\leq k\) that contains
  \((i, \mu_i)\) and that is entailed by a set of size \(k\) that agrees with
  \(\mu\).  In other words, every position of \(\mu\) is entailed by a subset of its
  positions. We drop the \(k\) in \(k\)-tangled if it is clear from context.
\end{definition}

\begin{lemma}\label{lem:tangledimpliessmall}
  Let \(F \subseteq \good_n\).  If \(F\) is \(k\)-tangled, then
  \(|F| < \sqrt{n}^{2k\sqrt{n}/(2k+1)}\).  In particular, \(|F| < |\good_n| / n^d\).
\end{lemma}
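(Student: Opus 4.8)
The plan is to convert $k$-tangledness into an injective, \emph{canonical} encoding of the packed words $\Phi = \pack(F)$ and then count the possible encodings. Throughout I write \(m=\sqrt{n}\), so that positions and contents both range over \([m]\), \(\Phi \subseteq [m]^m\), and \(|\good_n| = m^m\). First I would unfold \Cref{def:entail} into a \emph{local dependency}: $k$-tangledness says that for every \(\mu\in\Phi\) and every position \(i\) there is a set \(J\) of at most \(k\) positions, all different from \(i\), such that the words of \(\Phi\) agreeing with \(\mu\) on \(J\) realise at most \(k\) distinct contents at \(i\) (these are the pairs of the entailed \(i\)-set \(D\), one of which is \((i,\mu_i)\)). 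In short, reading off the contents of \(\mu\) on \(J\) confines position \(i\) to a pool of at most \(k\) candidates.

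Next I would fix \(\Phi\) once and for all and run a deterministic procedure depending on \(\Phi\) \emph{alone}. It maintains a revealed set \(R\), initially empty, together with the contents of \(\mu\) there; call \(i\notin R\) \emph{pinned} when the words of \(\Phi\) agreeing with \(\mu\) on \(R\) take at most \(k\) distinct contents at \(i\). As long as some position is pinned, reveal the least pinned one and append the rank in \([k]\) of its true content among the candidates; when nothing is pinned, reveal the least position of \([m]\setminus R\) and append its content in full, a symbol of \([m]\). Because the rule refers only to \(\Phi\), a decoder can replay it and reconstruct \(\mu\) from its record, so the encoding is injective and, writing \(f\) for the number of \emph{free} (full) reveals,
\[|\Phi|\ \le\ m^{f}\cdot k^{\,m-f}.\]

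The heart of the argument is to bound \(f\le \tfrac{2k}{2k+1}\,m\). The lever is that if \(i\) stays unpinned then its tangledness witness \(J\) cannot sit inside \(R\) — otherwise \(R\) would already confine \(i\) to \(\le k\) candidates — so at every step that forces a free reveal, each remaining position carries a size-\(\le k\) forcing set meeting the unrevealed part \(U=[m]\setminus R\). I would turn this dense, \(U\)-internal system of forcing sets into a charging argument: a single full reveal eventually resolves a block of companions, and balancing the \(\le k\) positions of a forcing set against the \(\le k\) candidate contents it still leaves open yields the ratio of \(2k\) free reveals per pinned one, i.e.\ \(f\le 2k(m-f)\), hence \(f\le \tfrac{2k}{2k+1}m\). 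A routine estimate (using that \(k\) is constant while \(m\to\infty\), so the resolutions of pinned positions are absorbed into the factor \(2k\) of the ratio) then gives \(|\Phi|<m^{2km/(2k+1)}\). The ``in particular'' clause is pure arithmetic: since \(m^{2km/(2k+1)} = m^{\,m-m/(2k+1)}\) and \(m/(2k+1)\ge 2d\) once \(m\ge 2d(2k+1)\),
\[|F|=|\Phi|<m^{\,m-m/(2k+1)}\le m^{\,m-2d}=\frac{m^m}{m^{2d}}=\frac{|\good_n|}{n^d}.\]

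The step I expect to be the genuine obstacle is the \(2k\!:\!1\) bound on free reveals. Tangledness furnishes forcing sets that depend on the \emph{individual} word \(\mu\), so naming them would cost \(\Theta(k\log m)\) bits per position and destroy any bound; the whole purpose of the \(\Phi\)-only, canonical decoding is to avoid transmitting this dependency structure at all, reducing everything to a statement about \(\Phi\) as a set. What remains, and what is combinatorially delicate, is to show that a word cannot be ``stuck'' too often — extracting the precise \(2k\!:\!1\) balance from the sole structural fact that at each stuck step \(U\) supports a system of size-\(\le k\) forcing sets confined to \(U\) itself.
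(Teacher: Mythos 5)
Your framework is sound, and your adaptive encoder is in fact tidier than the paper's own: the paper commits in advance to a set $K$ of fully specified positions and so pays an extra $\binom{m}{|K|}$ factor, whereas in your scheme the decoder re-derives from $\Phi$ and the already-revealed contents whether the next step is a full reveal or a rank in $[k]$, so $|\Phi|\le m^{f}k^{m-f}$ (with $f$ the \emph{maximum} number of full reveals over $\mu\in\Phi$) is correct, as is your unfolding of \Cref{def:entail}. The genuine gap is exactly the step you flag yourself: the bound on $f$ is never proved, and with your reveal rule it is not clear it can be. Revealing the \emph{least} unrevealed position has no reason to pin anything: for $k=1$ the set $\Phi=\{(\lceil x/2\rceil,x,x,\dots,x) : x\in[m]\}$ is $1$-tangled, yet after the first full reveal (position $1$) no position is pinned, because two seeds remain consistent. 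The idea that closes this hole in the paper is a different reveal rule: when stuck, pick any unresolved position $i$ and reveal \emph{all not-yet-handled positions of the entailing set $S$ witnessing the tangledness of $(\mu,i)$} --- at most $k$ full reveals --- after which $i$ itself is pinned by construction. This yields the ledger ``each stuck event costs at most $k$ full reveals and resolves at least one further position for free,'' hence $f\le km/(k+1)$ by the trivial induction ($|K|=sk$, $|K^{+}|=s(k+1)$ after $s$ stuck events); no charging argument over $U$-internal systems of forcing sets is needed, and the decoder still never needs to be told $S$ (it recomputes pinnedness from $\Phi$, exactly as in your scheme).

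Separately, the ratio you aim for is too weak even if you proved it: from $f\le\frac{2k}{2k+1}m$ your own count gives $|\Phi|\le m^{\frac{2k}{2k+1}m}\,k^{\frac{m}{2k+1}}$, which for every $k\ge 2$ is strictly \emph{larger} than the target $m^{2km/(2k+1)}$, so the lemma would not follow. The $2k{:}1$ ratio is not the combinatorial statement to be extracted; it is the exponent budget left over \emph{after} the rank factor $k^{m-f}$ has been absorbed. Concretely, from the stronger bound $f\le\frac{k}{k+1}m$ one gets
\[
m^{f}k^{m-f}\;\le\; m^{\frac{m}{k+1}\left(k+\log_{m}k\right)}\;\le\; m^{\frac{m}{k+1}\left(k+\frac{1}{k}\right)}\;\le\; m^{\frac{2k}{2k+1}m}
\]
for $m$ large enough and $k\ge 3$ (the paper's own final computation carries the same implicit constraint; small $k$ can be handled by padding entailing sets). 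So the combinatorial fact you must establish is the $k{:}1$ bound, not the $2k{:}1$ one, and it follows at once from the witness-set reveal rule above rather than from a delicate analysis of the greedy-least rule.
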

\begin{proof}

  Assume \(F\) is \(k\)-tangled and let \(\Phi = \pack(F)\).  We show that every word in
  \(\Phi\) can be fully described in \(\Phi\) by fully specifying a portion
  \(k/(k+1)\) of its positions and encoding the contents of each of the other
  \(1/(k+1)\) positions with elements from \([k]\). That is, if two words in \(\Phi\) have the same
  such description, they are the same, hence \(\Phi\) cannot be larger than the
  number of such descriptions.  We first show this property, then derive the
  numerical implication on \(|F|\).

  Let \(\mu \in \Phi\), we construct iteratively a set \(K\) of positions that we will
  fully specify and a set \(K^+\) of positions that are restricted when setting
  the positions in \(K\).

  First consider the pair \((1, \mu_1)\).  Since \(F\) is tangled, there is an 1-set
  containing \((1, \mu_1)\), entailed by a set \(S\) that agrees with \(\mu\).  We add to
  \(K\) the positions of \(S\) and to \(K^+\) the positions of \(S\) \emph{and} position
  \(1\).

  We now iterate this process: Take a pair \((i, \mu_i)\) such that
  \(i \notin K^+\).  There is an \(i\)-set containing \((i, \mu_i)\) that is entailed by a
  set \(S\) that agrees with \(\mu\).  Let \(S'\) be the set of positions of
  \(S\) that are not in \(K^+\).  We add \(S'\) to \(K\), and \(S' \cup \{i\}\) to
  \(K^+\).  Note that the size increase for \(K^+\) is one more than that for
  \(K\).  We continue iterating until all positions appear in \(K^+\).

  We now bound the size of \(K\) at the end of the computation.  For each
  iteration, in the worst case, we need to add \(k\) positions to \(K\) to obtain
  \(k+1\) new positions in \(K^+\) (this is the worst case in the sense that this is
  the worst ratio of the number of positions we need to pick in \(K\) to the
  number of positions that are put in \(K^+\)).  In that case, after \(s\) steps, we
  have \(|K| = sk\) and \(|K^+|=sk+s\).  Thus when \(|K^+| = \sqrt{n}\), that is, when
  no more iterations are possible, we have:
  \[sk+s = \sqrt{n} \Rightarrow s = \frac{\sqrt{n}}{k+1}.\]
  This shows that \(|K| \leq k\sqrt{n}/(k+1)\).

  We now turn to describing the word \(\mu\) using \(K\).  We first provide all the
  contents of \(\mu\) at positions in \(K\); call \(Z\) the set of pairs
  position/contents of \(\mu\) that correspond to positions in \(K\).  We mark the
  positions of \(K\) as \emph{specified}, and carry on to specify the other
  positions in a deterministic fashion.

  We first fix an arbitrary order on sets of pairs of position/contents.  We
  iterate through all the subsets of \(Z\) of size \(k\), in order.  For each such
  subset \(S\), we consider, in order again, the subsets \(D\) that are entailed by
  \(S\).  Assume \(D\) is an \(i\)-set; if position \(i\) is already specified, we do
  nothing, otherwise, we describe which element of \(D\) is \((i, \mu_i)\) using
  an integer in \([k]\), and mark \(i\) as specified.  We proceed until all the
  subsets of \(Z\) have been seen, at which point, by construction of \(K\), all the
  positions will have been specified.  As claimed, given \(Z\) and the description
  of which elements in sets \(D\) correspond to the correct contents, we can
  reconstruct \(\mu\).

  Summing up, to fully describe \(\mu\), we had to specify the positions of
  \(K\) (one of \(\binom{\sqrt{n}}{k\sqrt{n}/(k+1)}\) possible choices), their
  contents (one of \(\sqrt{n}^{k\sqrt{n}/(k+1)}\) possible choices), and
  for each position not specified by \(K\), we needed to provide an integer in
  \([k]\) (one of \(k^{\sqrt{n}/(k+1)}\) possible choices).  This shows that:
  \begin{align*}
    |F| & \le \binom{\sqrt{n}}{k\sqrt{n}/(k+1)} \cdot \sqrt{n}^{k\sqrt{n}/(k+1)} \cdot
    k^{\sqrt{n}/(k+1)}\\
    & \leq 2^{\sqrt{n}} \cdot 2^{(\sqrt{n}/(k+1)) (k \log{\sqrt{n}})} \cdot 2^{(\sqrt{n}/(k+1)) \log(k)}\\
    & = 2^{(\sqrt{n}/(k+1))( (k+1) + k\log{\sqrt{n}} + \log{k})}\\
    & \leq 2^{(\sqrt{n}/(k+1)) ((k+\frac{1}{k})\log{\sqrt{n}})} \tag{n\text{ large enough}}\\
    & \leq \sqrt{n}^{(k+\frac{1}{k})\sqrt{n}/(k+1)} \leq \sqrt{n}^{2k\sqrt{n}/(2k+1)}.
  \end{align*}

  The ``in particular'' part is a consequence of the fact that, for \(n\) large
  enough:
  \[\frac{|\good_n|}{n^d} = \frac{\sqrt{n}^{\sqrt{n}}}{n^d} \geq
  \sqrt{n}^{2k\sqrt{n}/(2k+1)}.\]
\end{proof}

\begin{lemma}\label{lem:nottangledimpliesklim}
  Let \(F \subseteq \good_n\). If \(F\) is not \(k\)-tangled, then \(F\) has a bad \(k\)-limit.
\end{lemma}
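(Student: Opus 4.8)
The plan is to reduce directly to \Cref{lem:packedlimit}: I will read off its two hypotheses from the failure of $k$-tangledness, so the whole argument is an unwinding of \Cref{def:entail} together with one small piece of size bookkeeping. Since $F$ is not $k$-tangled, the negation of \Cref{def:entail} supplies a packed word $\nu \in \Phi$ and a position $i \in [\sqrt{n}]$ witnessing the failure: \emph{no} set $S$ of size $k$ that agrees with $\nu$ entails an $i$-set $D$ of size at most $k$ that contains $(i, \nu_i)$. (The entailment condition already forces $i$ not to appear in $S$, so this requirement is automatic.)

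I would then take the candidate bad $k$-limit to be $\unpack(\mu)$, where $\mu$ is obtained from $\nu$ by overwriting position $i$ with $\bot$ and leaving every other position untouched. With this choice Property~1 of \Cref{lem:packedlimit} is immediate, with $\nu$ itself as the neighbouring word: $\mu_i = \bot$, $\mu_j = \nu_j$ for all $j \neq i$, and $\nu \in \Phi$. Note that $\nu_i \in [\sqrt{n}]$ since $\nu$ packs a word of $\good_n$, so indeed $\nu_i \neq \mu_i$ and $\nu$ genuinely differs from $\mu$ at exactly position $i$.

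The substance is Property~2, which I would verify by contraposition. Suppose it fails: then there is a content set $C \ni \nu_i$ and a position set $P \subseteq [\sqrt{n}] \setminus \{i\}$ with $|C| + |P| = k$ such that \emph{every} $\lambda \in \Phi$ matching $\nu$ on $P$ has $\lambda_i \in C$. I translate this into the entailment language by setting $S_0 = \{(p, \nu_p) : p \in P\}$ and $D = \{(i, c) : c \in C\}$; then $D$ is an $i$-set of size $|C| \leq k$ containing $(i, \nu_i)$, and the displayed failure statement says precisely that $S_0$ entails $D$. Because adjoining conditions can only shrink the set of words agreeing with a conditioning set, entailment is monotone under enlarging $S_0$, so I may pad $S_0$ up to a set $S$ of size exactly $k$ by adding pairs $(j, \nu_j)$ at fresh positions $j \notin P \cup \{i\}$ — possible since $\sqrt{n} > k$ for $n$ large. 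The resulting $S$ still agrees with $\nu$, still omits position $i$, has size $k$, and still entails $D$, contradicting the choice of $\nu$ and $i$. Hence Property~2 holds, and \Cref{lem:packedlimit} gives that $\unpack(\mu)$ is a bad $k$-limit for $F$.

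I expect the only delicate point to be this quantifier/size matching between the two formalisms: non-tangledness forbids a conditioning set of size \emph{exactly} $k$ from entailing a target of size \emph{at most} $k$, whereas Property~2 ranges over all splits $|C| + |P| = k$. The reconciliation is exactly the monotonicity padding above, which converts a size-$|P|$ conditioning set into a size-$k$ one without disturbing the entailment or reintroducing position $i$; everything else is direct definition-chasing.
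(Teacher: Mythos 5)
Your proof is correct and takes essentially the same route as the paper's: both extract the witnesses \((\nu, i)\) from the failure of \(k\)-tangledness, set \(\mu_i = \bot\), and establish Property~2 of \Cref{lem:packedlimit} via the translation \(S = \{(p,\nu_p) \mid p \in P\}\), \(D = \{(i,c) \mid c \in C\}\) after padding the conditioning set up to size \(k\) with pairs agreeing with \(\nu\). The only difference is presentational—the paper argues Property~2 directly (pad \(P\), then non-entailment of \(D\) yields the required \(\lambda\)), whereas you argue by contraposition using monotonicity of entailment under enlarging \(S\)—and your handling of the size bookkeeping is sound.
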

\begin{proof}
  Write \(\Phi = \pack(F)\).  That \(F\) is not tangled
  means that there is a word \(\nu \in \Phi\) and a position \(i\) such that for any set of
  pairs position/contents \(S\) and any \(i\)-set \(D\) that contains
  \((i, \nu_i)\), \(S\) does not entail \(D\).  We define \(\mu\) to be the word
  \(\nu\) but with \(\mu_i\) set to \(\bot\).  We show that \(\unpack(\mu)\) is a bad
  \(k\)-limit using \Cref{lem:packedlimit}.  Property~1 therein is true by
  construction, so we need only show Property~2.

  Let \(C \subseteq [\sqrt{n}]\) with \(\mu_i \in C\) and
  \(P \subseteq[\sqrt{n}] \setminus \{i\}\) with \(|C|+|P| = k\).  We add some more arbitrary
  positions in \(P\) so that \(|P| = k\), avoiding~\(i\).  Define:
  \[S = \{(p, \mu_p) \mid p \in P\}, \, D = \{(i, c) \mid c \in C\}.\] By hypothesis, since
  \((i, \mu_i) \in D\), \(S\) does not entail \(D\).  This means that there is a word
  \(\lambda \in \Phi\) such that \(S\) and \(\lambda\) agree, but
  \(\lambda_i \notin C\).  This is the word needed for Property~2 of \Cref{lem:packedlimit},
  concluding the proof.
\end{proof}

\begin{corollary}\label{cor:good}
  No \sta language can separate \(\good\) from \(\bad\).
\end{corollary}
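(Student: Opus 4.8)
The plan is to combine the two preceding lemmas through \Cref{cor:lb}, applied to the separating language itself. Suppose toward a contradiction that some language \(L \in \sta\) separates \good from \bad, so that \(\good \subseteq L\) and \(L \cap \bad = \emptyset\). I will verify that \(L\) satisfies the hypothesis of \Cref{cor:lb}, which forces \(L \notin \sta\) and yields the contradiction. Note that \Cref{cor:lb} is stated for an arbitrary language, so there is no obstacle to instantiating it at \(L\) rather than at \good.

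So fix arbitrary \(k, d \in \bbN\) and let \(n\) be large enough for the two lemmas to apply. Since \(\good \subseteq L\), every word of \(\good_n\) has length \(n\) and lies in \(L_n\), so I take \(L' = \good_n \subseteq L_n\). Now consider any subset \(F \subseteq L' = \good_n\) with \(|F| \geq |\good_n|/n^d\). The crux is the dichotomy on whether \(F\) is \(k\)-tangled: by \Cref{lem:tangledimpliessmall}, a \(k\)-tangled set satisfies \(|F| < |\good_n|/n^d\), which contradicts the size bound on \(F\). Hence \(F\) is \emph{not} \(k\)-tangled, and \Cref{lem:nottangledimpliesklim} provides a bad \(k\)-limit for \(F\), that is, a \(k\)-limit lying in \(\bad_n\).

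It then remains only to observe that such a limit falls outside \(L\): it belongs to \(\bad_n \subseteq \bad\), and \(L \cap \bad = \emptyset\) by the separation hypothesis. Thus, for the arbitrary \(k, d\) and the chosen \(n\), every \(F \subseteq L'\) of size at least \(|L'|/n^d\) admits a \(k\)-limit outside \(L\), which is exactly the hypothesis of \Cref{cor:lb}. That corollary delivers \(L \notin \sta\), contradicting our assumption, and the corollary follows.

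Since the real work has already been done in \Cref{lem:tangledimpliessmall} and \Cref{lem:nottangledimpliesklim}, there is no genuine combinatorial obstacle left; the proof is bookkeeping. The two points to get right are that the threshold \(|\good_n|/n^d\) in \Cref{cor:lb} matches the threshold in the contrapositive of \Cref{lem:tangledimpliessmall} (this is precisely why its ``in particular'' clause is phrased with this exact quantity), and that a \emph{bad} \(k\)-limit is by definition a \(k\)-limit inside \bad, so that the separation property \(L \cap \bad = \emptyset\) is exactly what places these limits outside \(L\).
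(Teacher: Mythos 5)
Your proof is correct and follows exactly the paper's own argument: apply \Cref{cor:lb} to the separating language \(L\) with \(L' = \good_n\), use \Cref{lem:tangledimpliessmall} to rule out tangledness of any large \(F\), and then \Cref{lem:nottangledimpliesklim} to produce a bad \(k\)-limit, which lies in \(\bad\) and hence outside \(L\). Your write-up is in fact slightly more careful than the paper's (which even contains a typo, concluding ``\(G\) is not in \sta'' where \(L\) is meant), since you explicitly verify the two hypotheses of \Cref{cor:lb} --- that \(\good_n \subseteq L_n\) and that the bad \(k\)-limits fall outside \(L\).
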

\begin{proof}
  We apply \Cref{cor:lb} on any language \(L\) that separates \(\good\) from
  \(\bad\).  We let \(k, d \in \bbN\), and \(n\) large enough; \(L'\) in the statement of
  \Cref{cor:lb} is set to \(\good_n\).  We are then given a set \(F\) of size at
  least \(|\good_n|/n^d\) and \Cref{lem:tangledimpliessmall} shows that \(F\) is not
  tangled.  \Cref{lem:nottangledimpliesklim} then implies that \(F\) has a bad
  \(k\)-limit.  \Cref{cor:lb} concludes that \(G\) is not in \sta, showing the
  statement.
\end{proof}

\begin{theorem}[Neutral Straubing Property for \(\Sigma_2\)]\label{thm:main}
  \[\sta \cap \Reg \cap \NL \subseteq \sto.\]
\end{theorem}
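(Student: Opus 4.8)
The plan is to prove the contrapositive: if $L$ is a regular language with a neutral letter and $L \notin \sto$, then $L \notin \sta$. This is precisely the statement of the section title, and it assembles the pieces developed above into a short chain; the genuine content already lives in the lemmas, so the theorem itself is an exercise in wiring them together with the directions kept straight.

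First I would invoke the algebraic characterization. Let $M$ be the ordered syntactic monoid of $L$, which is finite since $L$ is regular. Because $L \notin \sto$, \Cref{thm:eqs} furnishes witnesses $x, y \in M$ with $x$ an idempotent, $y$ a subword of $x$, and $x \not\leq xyx$. I fix these witnesses and let $T$ be the up-word problem of $M$ for $x$, exactly as constructed in \Cref{sec:somewords}. The defect $x \not\leq xyx$ is what guarantees that $T$-good words evaluate to $x$ and so land in $T$, while $T$-bad words evaluate to $xyx$ and so fall outside it; this is precisely the alignment that makes the reduction to $\good$/$\bad$ legitimate.

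Next I would show $T \notin \sta$. Suppose for contradiction that $T \in \sta$. Then \Cref{lem:ttogoodbad} produces a $\sta$ language separating $\good$ from $\bad$. But \Cref{cor:good} asserts that no $\sta$ language can separate $\good$ from $\bad$, a contradiction; hence $T \notin \sta$. Finally I would transport this back to $L$: since $T$ is one of the up-word problems of $M$, the reduction recorded in \Cref{lem:regtomon} (used in its contrapositive form, namely that a single hard up-word problem forces $L$ to be hard) yields $L \notin \sta$. This establishes the contrapositive and hence $\sta \cap \Reg \cap \NL \subseteq \sto$.

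The main obstacle is not in this assembly---which is essentially a three-step chain once the lemmas are in hand---but in the two ingredients on which it rests: the combinatorial heart \Cref{cor:good}, delivered by the tangled/nontangled dichotomy of \Cref{lem:tangledimpliessmall} and \Cref{lem:nottangledimpliesklim} via the limit machinery, and the reduction \Cref{lem:regtomon} that lets a single hard up-word problem certify hardness of $L$. The only point of care in the assembly itself is logical bookkeeping: \Cref{lem:ttogoodbad} is combined with \Cref{cor:good} to \emph{rule out} $T \in \sta$ rather than to prove anything positive, and \Cref{lem:regtomon} is applied in the contrapositive direction.
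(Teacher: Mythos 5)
Your proposal is correct and follows essentially the same route as the paper's own proof: extract the witnesses via \Cref{thm:eqs}, form the up-word problem \(T\), rule out \(T \in \sta\) by combining \Cref{lem:ttogoodbad} with \Cref{cor:good}, and transfer hardness back to \(L\) through the contrapositive of \Cref{lem:regtomon}. The paper states this chain more tersely, but the decomposition, the lemmas invoked, and the directions in which they are applied are identical.
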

\begin{proof}
  Let \(L \notin \sto\) with a neutral letter and \(T\) be the language defined in
  \Cref{sec:somewords}.  \Cref{cor:good} and \Cref{lem:ttogoodbad} imply that
  \(T\) cannot be in \sta, and in turn, \Cref{lem:regtomon} shows that \(L\) cannot
  be in \(\sta\).
\end{proof}

\section{Consequences}\label{sec:consequences}

\subsection{Life without neutral letters}\label{sec:delta}

The \emph{regular numerical predicates}, denoted \reg, are the numerical
predicates \(+1, <,\) and for any \(p > 0\), \(\predmod_p\) which is true of a
position if it is divisible by \(p\).  The term ``regular'' stems from the fact
that these are the properties on numerical positions that automata can express.
Recall that the Straubing Property for a logic \(\cL\) expresses that
\(\cL[\arb] \cap \Reg = \cL[\reg]\).

The Straubing Property does not immediately imply the Neutral Straubing
Property; for this, one would need in addition that
\(\cL[\reg] \cap \NL \subseteq \cL[<]\).  This latter property is called the \emph{Crane
  Beach Property} of \(\cL[\reg]\), and also stems from the natural idea that if a
language has a neutral letter, then numerical predicates do not provide any
useful information. Albeit natural, this property is false for
\(\FO[\arb]\)~\cite{barringtonilst05}, but \Cref{thm:main} shows that \(\Sigma_2[\reg]\) does have the
Crane Beach Property.

Relying on \Cref{thm:k} and some results from \cite{cadilhacp21}, we can show:
\begin{theorem}[Straubing Property of \(\Delta_2\)]\label{thm:strdelta2}
  \[\Delta_2[\arb]\cap \Reg = \Delta_2[\reg].\]
\end{theorem}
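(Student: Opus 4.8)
The plan is to prove the two inclusions of $\Delta_2[\arb]\cap\Reg=\Delta_2[\reg]$ separately. The inclusion $\Delta_2[\reg]\subseteq\Delta_2[\arb]\cap\Reg$ is immediate: $\reg\subseteq\arb$ gives $\Delta_2[\reg]\subseteq\Delta_2[\arb]$, and every formula over $\reg$ defines a regular language, since the regular predicates are exactly those recognised by finite automata, so $\Delta_2[\reg]\subseteq\Reg$ as well. All the content lies in $\Delta_2[\arb]\cap\Reg\subseteq\Delta_2[\reg]$, \ie in transferring an arbitrary-predicate definition of a regular language down to the regular predicates $+1$, $<$, and $\predmod_p$.

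For the hard inclusion I would use \cite{cadilhacp21} in two ways. First, it supplies a decidable algebraic characterisation of $\Delta_2[\reg]$ in terms of the syntactic morphism $h\colon A^*\to M$ of the language — the counterpart, for $\Delta_2$ and the predicates $\reg$, of what \Cref{thm:eqs} is for \sto. This characterisation is intrinsically two-sided, mirroring $\Delta_2=\Sigma_2\cap\Pi_2$, and it already absorbs the modular counting that separates $\reg$ from the bare order. Second, \cite{cadilhacp21} provides a transfer result that reduces the full Straubing property for $\Delta_2$ to a single concrete lower bound against \sta: the inclusion can fail only if a fixed regular language, inter-reducible with $K=(ac^*b+c)^*$, lies in \sta.

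Granting the transfer, \Cref{thm:k} closes the argument, since $K\notin\sta$ discharges exactly this lower bound and hence forces $\Delta_2[\arb]\cap\Reg\subseteq\Delta_2[\reg]$. The flavour of the transfer is the one already seen in \Cref{sec:somewords} and \Cref{lem:ttogoodbad}: from a regular language whose syntactic morphism violates the characterisation, one assembles witnessing elements of $M$ together with representative words, annotates each position by its residue modulo the idempotent power $\omega$ of $M$ so that the $\predmod_p$ information is preserved, and realises $K$ by a rewriting implementable with wires at the top of an \EAE circuit; $K\notin\sta$ then contradicts the assumed membership of the language in $\Delta_2[\arb]\subseteq\sta$.

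The step I expect to be genuinely delicate is this transfer, and it \emph{must} exploit $\Delta_2=\Sigma_2\cap\Pi_2$ rather than $\Sigma_2$ alone. A naive one-sided reduction — ``$L\notin\Sigma_2[\reg]$ forces $K$ to reduce to $L$'' — cannot be correct, since it would give $\sta\cap\Reg\subseteq\Sigma_2[\reg]$, the full Straubing property for $\Sigma_2$, which is open. The subtle point is therefore to show that it is the \emph{simultaneous} failure on the $\Sigma_2[\reg]$ and the $\Pi_2[\reg]$ sides of the characterisation that forces the $K$-pattern to appear, and to verify that the modular annotation stays expressible with $<$ and $\predmod_p$, so that the formula reconstructed on the languages that do satisfy the characterisation genuinely lands back in $\Delta_2[\reg]$.
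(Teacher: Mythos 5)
Your skeleton --- reduce the hard inclusion to results of \cite{cadilhacp21} plus the lower bound \(K \notin \sta\) of \Cref{thm:k} --- is indeed the paper's route, and your observation that the argument must exploit \(\Delta_2 = \Sigma_2 \cap \Pi_2\) (since a one-sided version would prove the open Straubing Property for \(\Sigma_2\)) is exactly right. But there are two genuine gaps in how you invoke \cite{cadilhacp21}. First, what that paper supplies is a characterization of \(\FO^2[\reg]\), not of \(\Delta_2[\reg]\): its key statement (\Cref{lem:fo2k}) says that any \emph{lm-variety} \(\cV\) with \(\FO^2[\reg] \subseteq \cV \subseteq \FO[\arb]\) and \(K \notin \cV\) equals \(\FO^2[\reg]\). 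Applying this to \(\cV = \Delta_2[\arb] \cap \Reg\) only yields \(\cV = \FO^2[\reg]\); to reach the theorem you still need the equality \(\Delta_2[\reg] = \FO^2[\reg]\), which is also required to check the hypothesis \(\FO^2[\reg] \subseteq \cV\). This equality is not in \cite{cadilhacp21} and is a substantial part of the paper's proof: it is derived from Thérien--Wilke's \(\FO^2[<,+1] = \Delta_2[<,+1]\)~\cite{therienw98} by an annotation argument that moves the \(\predmod_p\) information into the alphabet and verifies that the set of correctly annotated words is definable in both classes. Your proposal has no step playing this role.

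Second, the ``transfer'' is not a circuit-level wire-reduction built from a syntactic-morphism violation, and the place where two-sidedness enters is not a ``simultaneous equational failure'' argument. The lemma of \cite{cadilhacp21} requires \(\cV\) to be an lm-variety, i.e., closed under Boolean operations (including complement), quotients, and inverse lm-morphisms; verifying these closure properties for \(\Delta_2[\arb] \cap \Reg\) is the other half of the paper's proof, done by explicit circuit manipulations clause by clause. Closure under complement is available precisely because \(\Delta_2\) is two-sided --- for \(\Sigma_2\) one only gets a \emph{positive} lm-variety, and the corresponding maximality statement (\Cref{conj:loc}) is open. So your instinct about where \(\Sigma_2 \cap \Pi_2\) is indispensable is correct in spirit, but the mechanism is complement-closure of the variety fed into a black-box maximality lemma; the reduction you sketch (assembling witnesses in \(M\), annotating by residues, realizing \(K\) with wires) is the content that lives \emph{inside} the proof of that lemma in \cite{cadilhacp21}, and attributing it directly to a characterization of \(\Delta_2[\reg]\) that does not exist there leaves your argument without a foundation.
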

\begin{proof}
  The proof structure is as follows: We first show that
  \(\Delta_2[\arb] \cap \Reg\) has some closure properties, so that it is a so-called
  \emph{lm-variety}.  We then show that \(\Delta_2[\reg]\) recognizes precisely all the
  regular languages definable with a first-order formula with \emph{two}
  variables that uses \reg as numerical predicates; that class of languages is
  denoted \(\FO^2[\reg]\).  We then rely on the following lemma, where \(K
  = (ac^*b+c)^*\):
  \begin{nest}
    \begin{lemma}[{From \cite[Lemma~8]{cadilhacp21}}]\label{lem:fo2k}
      If an lm-variety of regular languages \(\cV\) satisfies:
      \[\FO^2[\reg] \subseteq \cV \subseteq \FO[\arb] \text{ and } K \notin \cV\]
      then \(\cV = \FO^2[\reg]\).
    \end{lemma}
  \end{nest}
  Since \(\Delta_2[\arb] \cap \Reg\) satisfies the hypotheses, it is equal to \(\FO^2[\reg]
  = \Delta_2[\reg],\) concluding the proof.

  \paragraph{\(\bm{\Delta_2[\mathbf{arb}] \cap \mathbf{Reg}}\) is an lm-variety.}  We ought to first define
  \emph{lm-variety}.  If for a morphism \(h\colon A^* \to B^*\) there is a
  \(k\) such that \(h(A) \subseteq B^k\), we call \(h\) an \emph{lm-morphism}, where lm stands
  for \emph{length-multiplying}.  Given a language \(L\) and a letter \(a\), the
  \emph{left quotient} of \(L\) by \(a\) is the set
  \(a^{-1}L = \{v \mid av \in L\}\).  The \emph{right quotient} \(La^{-1}\) is defined
  symmetrically.  An \emph{lm-variety of languages} is a set of languages closed
  under the Boolean operations, quotient, and inverse lm-morphisms.

  Since \(\Reg\) is an lm-variety of languages, it is sufficient to show that
  \(\Delta_2[\arb]\) is too; this is not hard:
  \begin{itemize}
  \item Boolean operations: Both \(\Sigma_2[\arb]\) and \(\Pi_2[\arb]\) formulas are closed
    under Boolean OR and AND, hence the classes of languages they recognize are
    closed under union and intersection, and so is \(\Delta_2[\arb]\).  Also,
    since the negation of a \(\Sigma_2[\arb]\) formula is a \(\Pi_2[\arb]\) formula, and
    vice versa, \(\Delta_2[\arb]\) is closed under complement.
  \item Quotient: We show that \(\Sigma_2[\arb]\) is closed under quotient; the proof
    is the same for \(\Pi_2[\arb]\), and this implies that \(\Delta_2[\arb]\) is also
    closed under quotient.  Let \(L \in \Sigma_2[\arb]\) and \(a\) be a letter.  Consider
    the circuit for the words of length \(n\) in \(L\).  We can hardwire the
    first letter to \(a\); the resulting circuit has \(n-1\) inputs, and recognizes
    a word \(w\) iff \(aw \in L\).  The family thus obtained recognizes \(a^{-1}L\).  The
    argument for right quotient is similar.
  \item Lm-morphisms: Again, we show this holds for \(\Sigma_2[\arb]\), the proof for
    \(\Pi_2[\arb]\) being similar, and these two facts imply closure under
    lm-morphisms of \(\Delta_2[\arb]\).  Let \(L \in \Sigma_2[\arb]\) over the alphabet
    \(B\) and \(h\) be an lm-morphism such that \(h(A) \subseteq B^k\) for some
    \(k\). Consider the circuit for the words of \(L\) of length \(kn\) for some
    \(n\).  Given a word in \(A^n\), we can use wires\mcnote{In fact, we need
      slightly more than wires, we may need small OR gates.  This is not a
      problem for \(\Sigma_2\), but is it for \(\Pi_2\)?} to map each input letter
    \(a \in A\) to \(h(a)\), and we can feed the resulting word to the circuit for
    \(L\).  A word \(w \in A^n\) is thus accepted iff \(h(w) \in L\), hence the circuit
    family thus defined recognizes \(h^{-1}(L)\).
  \end{itemize}

  \paragraph{\(\bm{\Delta_2[\mathbf{reg}]}\) and
    \(\bm{\mathbf{FO}^2[\mathbf{reg}]}\) recognize the same languages} We show the
  inclusion from left to right, the converse being similar.  We rely on the fact
  that \(\FO^2[<, +1] = \Delta_2[<, +1]\), a result due to Thérien and
  Wilke~\cite[Theorem~7]{therienw98}.  The rest of our proof is
  fairly simple: we put the information given by the \(\predmod_p\) predicates
  within the alphabet, show that this information is easily checked with one
  universal quantifier if we have \reg predicates, and that if the modular
  predicates are put within the alphabet, the only required predicates to
  express our \(\Delta_2[\reg]\) formula are \(<\) and \(+1\).  We then rely on the
  equivalence of \(\FO^2\) and \(\Delta^2\) over these predicates to conclude.

  Formally, let \(\phi \in \Delta_2[\reg]\) be a formula over the alphabet \(A\).  Let
  \(P\) be the set of moduli used in \(\phi\), that is, the predicate \(\predmod_p\) appears
  in \(\phi\) iff \(p \in P\).  The \emph{\(P\)-annotation} of a word
  \(w \in A^*\) is the word in \((A \cup 2^P)^*\) that indicates, for each position
  \(i\), the set of moduli in \(P\) that divide \(i\).  In other words, the
  \(P\)-annotation of \(w = w_1w_2\cdots w_n\), with each \(w_i \in A\), is the word of
  length \(n\) whose \(i\)-th letter is:
  \[\binom{w_i}{\{p \in P \mid p \text{ divides } i\}} \in (A \cup 2^P).\]
  Let \(\cW\) be the set of words in \((A \cup 2^P)^*\) that are \(P\)-annotations.
  Then:
  \begin{itemize}
  \item \(\cW \in \FO^2[\reg] \cap \Delta_2[\reg]\).  Indeed, a formula for \(\cW\) need only
    assert that for all positions \(i\), the \(2^P\) part of the letter at position
    \(i\) is exactly the set \(\{p \in P \mid p \text{ divides } i\}\).  This can be
    written as a single universal quantifier followed by a quantifier-free
    formula.
  \item There is a \(\phi' \in \Delta_2[<, +1]\) such that the words of \(\cW\) that satisfy
    \(\phi'\) are precisely the \(P\)-annotations of words that satisfy
    \(\phi\).  The formula \(\phi'\) is simply the formula \(\phi\) in which each predicate
    \(\predmod_p(x)\) is replaced with the property ``\(p\) belongs to the \(2^P\) part of
    the letter at position \(x\),'' which can be written as a simple disjunction.
    If the input word is a \(P\)-annotation, \(\predmod_p(x)\) is indeed equivalent to
    that property.
  \end{itemize}
  Since \(\FO^2[<, +1] = \Delta_2[<, +1]\), there is a formula \(\psi\) of \mbox{\(\FO_2[<,+1]\)} that
  accepts the same language as \(\phi'\).  With \(\psi_\cW\) the \(\FO^2[\reg]\) formula for
  \(\cW\), we conclude that \(\psi \land \psi_\cW\) is a \(\FO^2[\reg]\) formula that is
  equivalent to \(\phi\).
\end{proof}

The proof of the previous statement hinged on the characterization given by
\Cref{lem:fo2k}.  To show the (nonneutral) Straubing Property of
\(\Sigma_2[\arb]\), a similar statement will need to be proved.  To make the similarity
more salient, we reword \Cref{lem:fo2k} as the equivalent statement:
\begin{lemma}
  If an lm-variety of regular languages \(\cV\) satisfies:
  \[\Delta_2[\reg] \subseteq \cV \subseteq \FO[\arb] \text{ and } \cV \cap \NL \subseteq \Delta_2[<].\]
  then \(\cV = \Delta_2[\reg]\).
\end{lemma}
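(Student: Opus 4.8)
The plan is to deduce the statement from \Cref{lem:fo2k} by showing that, under the common hypothesis \(\Delta_2[\reg] \subseteq \cV \subseteq \FO[\arb]\), the side condition \(\cV \cap \NL \subseteq \Delta_2[<]\) is interchangeable with the side condition \(K \notin \cV\) of \Cref{lem:fo2k}. Combined with the identity \(\Delta_2[\reg] = \FO^2[\reg]\) established in the proof of \Cref{thm:strdelta2}, this turns the present statement into \Cref{lem:fo2k} read through that identity, and yields \(\cV = \FO^2[\reg] = \Delta_2[\reg]\).

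For the direction that actually proves the lemma, I would argue that \(\cV \cap \NL \subseteq \Delta_2[<]\) forces \(K \notin \cV\). The key point is that \(K = (ac^*b+c)^*\) admits \(c\) as a neutral letter, so \(K \in \NL\). If \(K\) were in \(\cV\), the side condition would place \(K\) in \(\Delta_2[<] \subseteq \Sigma_2[<] = \sto \subseteq \sta\), contradicting \Cref{thm:k}, which asserts \(K \notin \sta\). Hence \(K \notin \cV\), and \Cref{lem:fo2k} applies verbatim, its inclusion hypothesis \(\FO^2[\reg] \subseteq \cV\) being exactly \(\Delta_2[\reg] \subseteq \cV\). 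This already establishes \(\cV = \Delta_2[\reg]\).

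To round out the claimed equivalence of the two side conditions, I would also check the converse, namely \(K \notin \cV \Rightarrow \cV \cap \NL \subseteq \Delta_2[<]\). Here \Cref{lem:fo2k} already gives \(\cV = \Delta_2[\reg]\), so the task reduces to the Crane Beach property \(\Delta_2[\reg] \cap \NL \subseteq \Delta_2[<]\). Given \(L \in \Delta_2[\reg] \cap \NL\), observe that \(L\) is regular and lies in \(\Sigma_2[\reg] \subseteq \sta\), so \Cref{thm:main} gives \(L \in \sto\). Since \(\Delta_2\) is closed under complement and the neutral letter of \(L\) remains neutral for its complement, the same argument applied to \(\overline{L}\) gives \(\overline{L} \in \sto\), that is, \(L \in \Pi_2[<]\); therefore \(L \in \Sigma_2[<] \cap \Pi_2[<] = \Delta_2[<]\).

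The first direction is routine once one records that \(K\) has a neutral letter and invokes \Cref{thm:k}, and it alone suffices to prove the lemma as stated. I expect the main obstacle to lie in the second direction: landing in the two-sided class \(\Delta_2[<]\) rather than merely in \(\sto\) requires pairing the Neutral Straubing Property for \(\Sigma_2\) (\Cref{thm:main}) with the complementation argument above, which is precisely what upgrades \(\sto\) to \(\Delta_2[<]\). One may accordingly present the first direction as the proof and attach the converse only as justification that the two formulations are genuinely equivalent.
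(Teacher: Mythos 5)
Your proof is correct, and it shares the paper's skeleton---reduce to \Cref{lem:fo2k} through the identity \(\Delta_2[\reg]=\FO^2[\reg]\) by showing that the side condition \(\cV \cap \NL \subseteq \Delta_2[<]\) can be exchanged for \(K \notin \cV\)---but you justify the two implications with genuinely different ingredients. For the direction that actually proves the lemma, the paper quotes the known fact \(K \notin \FO^2[<] = \Delta_2[<]\) from prior work, whereas you derive it from \Cref{thm:k} via \(\Delta_2[<] \subseteq \sto \subseteq \sta\); both are sound, and yours trades an external citation for the paper's own lower bound. The divergence is larger in the converse direction: the paper leans on two cited results (that \(\cV \subseteq \FO[\arb]\) and \(K \notin \cV\) force \(\cV \subseteq \FO^2[\reg]\), and the Crane Beach property \(\FO^2[\reg] \cap \NL \subseteq \FO^2[<]\)), whereas you obtain \(\cV = \Delta_2[\reg]\) from \Cref{lem:fo2k} itself and then prove the Crane Beach property of \(\Delta_2[\reg]\) internally by applying \Cref{thm:main} to both \(L\) and \(\overline{L}\)---using that \(\Delta_2[\reg]\) languages are regular, that a neutral letter for \(L\) is neutral for \(\overline{L}\), and that complementation swaps \(\Sigma_2\) and \(\Pi_2\)---so that \(L \in \Sigma_2[<] \cap \Pi_2[<] = \Delta_2[<]\). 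Your route buys self-containedness (everything rests on the paper's own theorems plus \Cref{lem:fo2k}) at the price of invoking the heavy machinery of \Cref{thm:main} for a fact the literature already supplies; the paper's route is shorter but citation-dependent. You are also right that only the first implication is needed for the lemma as stated; the paper establishes both directions because it frames this lemma as an equivalent rewording of \Cref{lem:fo2k}.
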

\begin{proof}
  The second part of the proof of \Cref{thm:strdelta2} shows that
  \(\FO^2[\reg] = \Delta_2[\reg]\), while the same was already known if \(<\) is the only
  available numerical predicate, so we can freely swap \(\FO^2\) for
  \(\Delta_2\) in the statement of \Cref{lem:fo2k}.  We need to show that the
  hypothesis \(\cV \cap \NL \subseteq \Delta_2[<]\) is equivalent to
  \(K \notin \cV\).  For the left-to-right implication, it is known~\cite{kouckypt05} that
  \(K \notin \FO^2[<] = \Delta_2[<]\).  For the right-to-left implication, we use the
  hypothesis that \(\cV \subseteq \FO[\arb]\): \cite[Theorem~9]{cadilhacp21} shows that
  \(\cV \subseteq \FO[\arb]\) and \(K \notin \cV\) implies that
  \(\cV \subseteq \FO^2[\reg]\) and \cite[Theorem~15]{cadilhacp21} asserts that
  \(\FO^2[\reg] \cap \NL \subseteq \mathmbox{\FO^2[<]}\).  Hence
  \(\cV \cap \NL \subseteq \FO^2[<]\) and replacing \(\FO^2\) with \(\Delta_2\) concludes the proof.
\end{proof}

A \emph{positive lm-variety} is defined just as lm-variety, but without
requiring closure under complement.  The statement we need for \(\Sigma_2\) thus reads:
\begin{conjecture}\label{conj:loc}
  If a positive lm-variety of regular languages \(\cV\) satisfies:
  \[\Sigma_2[\reg] \subseteq \cV \subseteq \FO[\arb] \text{ and } \cV \cap \NL \subseteq \Sigma_2[<]\]
  then \(\cV = \FO^2[\reg]\).
\end{conjecture}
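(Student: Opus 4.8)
The plan is to follow the template of the reworded \(\Delta_2\) lemma, replacing the complement-closed tools of \cite{cadilhacp21} with positive analogues adapted to one-alternation formulas. One inclusion is free: since \(\FO^2[\reg]=\Delta_2[\reg]\) (established inside the proof of \Cref{thm:strdelta2}) and \(\Delta_2[\reg]\subseteq\Sigma_2[\reg]\subseteq\cV\) by hypothesis, we already have \(\FO^2[\reg]\subseteq\cV\). All the difficulty therefore concentrates in the reverse inclusion \(\cV\subseteq\FO^2[\reg]\).

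The entry point for the reverse inclusion is the language \(K=(ac^*b+c)^*\), which has \(c\) as a neutral letter. By \Cref{thm:k} we have \(K\notin\sta\), and since \(\Sigma_2[<]\subseteq\sta\) this gives \(K\notin\Sigma_2[<]\). If \(K\) belonged to \(\cV\), then \(K\in\cV\cap\NL\subseteq\Sigma_2[<]\) by the Crane Beach hypothesis, a contradiction; hence \(K\notin\cV\). This mirrors exactly the step \(K\notin\Delta_2[<]\Rightarrow K\notin\cV\) of the \(\Delta_2\) argument, but draws its lower bound from \Cref{thm:k} rather than from \cite{kouckypt05}.

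With \(K\notin\cV\) secured, the \(\Delta_2\) proof appeals to \cite[Theorem~9]{cadilhacp21} — every lm-variety inside \(\FO[\arb]\) that avoids \(K\) is contained in \(\FO^2[\reg]\) — together with \cite[Theorem~15]{cadilhacp21} for the Crane Beach direction. The plan is to prove the \emph{positive}-lm-variety counterpart of \cite[Theorem~9]{cadilhacp21}: every positive lm-variety \(\cV\subseteq\FO[\arb]\) with \(K\notin\cV\) is already contained in \(\FO^2[\reg]\). Granting this statement and combining it with the inclusion \(\FO^2[\reg]\subseteq\cV\) from the first paragraph yields \(\cV=\FO^2[\reg]\), as required.

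The hard part will be this positive counterpart of \cite[Theorem~9]{cadilhacp21}. Its original proof is driven by closure under complement, which a positive lm-variety lacks; without that symmetry, forbidding the single language \(K\) appears to confine \(\cV\) only to within \(\Sigma_2[\reg]\), and the further descent from \(\Sigma_2[\reg]\) down to \(\FO^2[\reg]\) is precisely the step that complement-closure delivered for free in the \(\Delta_2\) case. Recovering this descent for one-alternation varieties — equivalently, identifying which positive lm-varieties lying strictly between \(\FO^2[\reg]\) and \(\Sigma_2[\reg]\) can be excluded, and by what forbidden patterns — is where a new combinatorial lower bound, in the spirit of the position-entailment technique developed in \Cref{sec:nogoodbadsepinsta}, will be required.
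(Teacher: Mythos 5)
The statement you are proving is \Cref{conj:loc}, which the paper does \emph{not} prove: it is explicitly left as an open conjecture, with the remark that it would follow from the locality of the algebraic counterpart of \(\Sigma_2[<]\), ``a notoriously hard problem.'' So there is no paper proof to match, and your proposal is, by your own admission, a reduction rather than a proof: everything is deferred to the unproven ``positive lm-variety counterpart of \cite[Theorem~9]{cadilhacp21},'' which is exactly where the conjecture's content lives. The parts you do carry out are sound and mirror the paper's reworded \(\Delta_2\) lemma correctly: \(\FO^2[\reg]=\Delta_2[\reg]\subseteq\Sigma_2[\reg]\subseteq\cV\) is free, and \(K\notin\cV\) follows from \Cref{thm:k} together with the Crane Beach hypothesis, since \(K\) has the neutral letter \(c\) and \(\Sigma_2[<]\subseteq\sta\). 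But none of this touches the hard direction.

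More seriously, the target of your reduction is provably false, so the plan cannot be completed as stated. The printed conclusion \(\cV=\FO^2[\reg]\) is almost certainly a typo for \(\cV=\Sigma_2[\reg]\): compare the \(\Delta_2\) template (hypothesis \(\Delta_2[\reg]\subseteq\cV\), conclusion \(\cV=\Delta_2[\reg]\)) and the paper's remark that instantiating \(\cV=\sta\cap\Reg\) should yield the Straubing Property \(\sta\cap\Reg=\Sigma_2[\reg]\), which requires the conclusion \(\Sigma_2[\reg]\), not \(\FO^2[\reg]\). Indeed \(\cV=\Sigma_2[\reg]\) itself satisfies every hypothesis — it is a positive lm-variety contained in \(\FO[\arb]\), and \(\Sigma_2[\reg]\cap\NL\subseteq\Sigma_2[<]\) is exactly the Crane Beach property granted by \Cref{thm:main} — yet \(\Sigma_2[\reg]\neq\FO^2[\reg]\), because \(\FO^2[\reg]=\Delta_2[\reg]\) is closed under complement while level 2 of the Straubing--Thérien hierarchy is strict (there are neutral-letter languages in \(\Sigma_2[<]\setminus\Delta_2[<]\), and by \cite[Theorem~15]{cadilhacp21} such a language cannot lie in \(\FO^2[\reg]\)). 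The same witness refutes the ``positive Theorem~9'' you propose to prove: \(K\notin\Sigma_2[\reg]\) by \Cref{thm:k}, yet \(\Sigma_2[\reg]\not\subseteq\FO^2[\reg]\), so forbidding \(K\) in a positive lm-variety cannot force containment in \(\FO^2[\reg]\). Your last paragraph actually senses this obstruction (``forbidding \(K\) appears to confine \(\cV\) only to within \(\Sigma_2[\reg]\)''), but the correct response is not to seek a descent from \(\Sigma_2[\reg]\) to \(\FO^2[\reg]\) — no such descent exists — it is to repair the conjecture's conclusion to \(\cV=\Sigma_2[\reg]\) and aim your hypothetical positive Theorem~9 at that class instead: every positive lm-variety inside \(\FO[\arb]\) avoiding \(K\) (or some suitable forbidden family) should be contained in \(\Sigma_2[\reg]\). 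That statement is open, and proving it is the genuinely new work the conjecture calls for.
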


If the conjecture held, then using \(\cV = \sta \cap \Reg\) would show the
Straubing Property for \(\sta\).




\subsection{On the fine structure of \(\AC^0\)}\label{sec:conscirc}

For this section, we use notations similar to~\cite{macielpt00} on circuit
complexity (these correspond to the classes \(\text{BC}^0_i\)
therein):\mcnote{Note that Thérien defines these classes as the \emph{Boolean
    closure} of what I wrote.  It's not all that important, I believe, as they
  show results on BC and translate them to AC.}
\begin{itemize}
\item \(\AC^0_i\) is the class of polynomial-size, depth-\(k\) Boolean circuit
  families, where all the circuits in a family have the same kind of output gate (AND
  or OR);\mcnote{This is a bit awkward, but we need it for smooth lower bounds.}
\item \(\widehat{\AC^0_i}\) is defined similarly, the only difference being that
  the input gates of the circuits are allowed to compute any function of at most
  a constant number of positions of the input string.  This class is equivalent
  to \(\Sigma_i[\arb] \cup \Pi_i[\arb]\) (the \(i\) here is the number of quantifiers blocks,
  so that there are \(i-1\) alternations between \(\exists\) and \(\forall\)).
\end{itemize}

The implied hierarchies interleave in a strict way:
\[\AC^0_1 \subsetneq \widehat{\AC^0_1} \subsetneq \AC^0_2 \subsetneq \cdots\]
The strictness of the hierarchy was independently obtained by \cite{macielpt00} and
\cite{caich98}, both relying on previous bounds by
Håstad~\cite{hastad89}, but with very different approaches.  However, the
question of finding \emph{explicit} languages that separate this hierarchy is
still open, to the best of our knowledge.  We make some modest progress towards
this:\mcnote{Can we quickly find languages to separate the lower levels?}
\mcnote{Is it really interesting to have \(K'\)?  We show that \(D_1^{(2)}\) also
  fits that bill below.  I wouldn't want this to look like we're trying to fill
  pages.}
\begin{theorem}
  The language \(K' = K\cdot bc^*b\cdot A^*\) is in \(\AC^0_3 \setminus \widehat{\AC^0_2}\).
\end{theorem}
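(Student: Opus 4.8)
The plan is to establish the two memberships separately: the positive part \(K'\in\AC^0_3\) by exhibiting an explicit depth-3 circuit, and the negative part \(K'\notin\widehat{\AC^0_2}=\sta\cup\Pi_2[\arb]\) by showing \(K'\notin\sta\) and \(K'\notin\Pi_2[\arb]\) independently. For the upper bound, recall that \(w\in K'\) exactly when \(w\) can be cut as \(u\cdot b\cdot z\cdot b\cdot r\) with \(u\in K\), \(z\in c^*\) and \(r\) arbitrary. First I would install a top \(\OR\) gate ranging over the \(O(n^2)\) choices of the two cut positions \(i<j\) (the two \(b\)'s of the marker). For a fixed pair \((i,j)\) the branch must verify that \(w_i=b\), that \(w_j=b\), that every position strictly between \(i\) and \(j\) carries a \(c\), and that the prefix \(w_1\cdots w_{i-1}\) lies in \(K\). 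The only nontrivial part is the last test, i.e.\ that deleting the \(c\)'s from the prefix yields a word of \((ab)^*\); as a circuit this is the conjunction of the four conditions ``no factor \(a c^* a\)'', ``no factor \(b c^* b\)'', ``first non-\(c\) letter is \(a\)'', ``last non-\(c\) letter is \(b\)'', each an \(\AND\) of \(\OR\)'s of input literals (the inner ``all \(c\) in between'' being a single hardwired \(\AND\) gate). Thus each branch is an \(\AND\)-of-\(\OR\)'s, and the full circuit is an \(\OR\)-\(\AND\)-\(\OR\) of polynomial size, placing \(K'\) in \(\AC^0_3\).

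For \(K'\notin\sta\) I would exploit that \(c\) is a neutral letter of \(K'\) and that \(K'\) is regular, so by the Neutral Straubing Property (\Cref{thm:main}) it suffices to prove \(K'\notin\sto\). This I would obtain through the decidable characterization of \Cref{thm:eqs}, applied to the ordered syntactic monoid \(M\) of \(K'\). Viewing \(M\) as transformations of the minimal automaton of \(K'\), one checks that the image \(x\) of \(ab\) is idempotent while \(b\) is a subword of \(ab\); the image \(x\,b\,x\) of \(abbab\) sends the ``completing'' state (the state reached after a prefix in \(K\)) to the rejecting sink, whereas \(x\) fixes it. Hence \(x\not\le x\,b\,x\) in the syntactic order, which violates the equation of \Cref{thm:eqs} and shows \(K'\notin\sto\). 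Alternatively, the same conclusion follows by a wiring reduction from \Cref{thm:k}: one maps a candidate word for \(K\) to a word of \(K'\) that forces the marker at the very end, so that a \(\sta\) family for \(K'\) would yield one for \(K\), using \Cref{lem:regtomon} to transfer between the language and its up-word problems.

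The remaining and most delicate step is \(K'\notin\Pi_2[\arb]\). Since \(K'\in\Pi_2[\arb]\) if and only if \(\overline{K'}\in\sta\), and \(\overline{K'}\) is again regular with neutral letter \(c\), \Cref{thm:main} reduces the task to showing \(\overline{K'}\notin\sto\), i.e.\ that the syntactic monoid of \(K'\) fails the equation of \Cref{thm:eqs} read with the reversed order: some idempotent \(x\) and subword \(y\) of \(x\) must satisfy \(x\,y\,x\not\le x\). I expect this to be the crux of the whole argument. The factor \(K\) by itself only obstructs \(\sta\) (indeed \(K\) is readily seen to be in \(\Pi_2[<]\)), so the obstruction against \(\Pi_2\) cannot come from the prefix alone; it must be produced by the interaction of the marker \(bc^*b\) with the trailing \(A^*\). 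The natural way to manufacture a witness is to insert an idempotent block straddling the two marker \(b\)'s, so that the enlarged word destroys every admissible marker while the shorter word keeps one.

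Locating such an \(x,y\) is precisely the step I expect to be hard, and it is where the construction must be scrutinized: if no such witness exists, then the trailing \(A^*\) is too permissive to impose any co-existential constraint and \(K'\) would in fact lie in \(\Pi_2\), so the suffix of \(K'\) would have to be strengthened to genuinely encode a \(\sta\)-hard complement (for instance by replacing the opaque tail \(A^*\) with a copy of \(\overline{K}\), whose non-membership in \(\Pi_2\) mirrors \Cref{thm:k}). Once a valid obstruction pattern is fixed, the two preceding steps are routine and the three pieces combine to give \(K'\in\AC^0_3\setminus\widehat{\AC^0_2}\).
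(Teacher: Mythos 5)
Your upper bound and your \(\sta\) lower bound are both sound and essentially the paper's: the same \(\OR\)\nobreakdash-over-marker-positions circuit on top of the AND-of-ORs characterization of \(K\), and then \Cref{thm:main} plus a violation of \Cref{thm:eqs}. Your witness for the latter (the idempotent image of \(ab\), its subword \(b\), and the context \((\varepsilon, bb)\): indeed \(ab\cdot bb \in K'\) while \(abbab\cdot bb \notin K'\)) is valid and differs only cosmetically from the paper's choice of \((ab)^2\) and \(ba\). One side remark: your ``alternative'' wiring reduction from \Cref{thm:k} does not work as stated, because appending \(bb\) to a candidate word does not force the marker to sit at the end --- e.g.\ \(abbba\cdot bb \in K'\) although \(abbba \notin K\); this is immaterial, since your primary argument stands.

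The genuine gap is the \(\Pi_2\) half. You set it up correctly (pass to the complement, which is regular with neutral letter \(c\), apply \Cref{thm:main}, and seek an idempotent \(x\) and subword \(y\) with \(xyx \not\leq x\)), but you never produce the witness; worse, you hedge that it may not exist and that \(K'\) might have to be redefined. The witness exists and is short --- it is exactly what the paper exhibits. Take \(x\) to be the image of \((ab)^3\) (the image of \(ab\) is idempotent, so \((ab)^3\) represents it) and \(y\) the image of \(bba\), a subword of \(ababab\). Then \((ab)^9 \notin K'\), since no prefix of \((ab)^9\) lying in \(K\) is followed by \(bc^*b\), whereas \((ab)^3\cdot bba\cdot(ab)^3 \in K'\), because the prefix \((ab)^3 \in K\) is immediately followed by the marker \(bb\). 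So deleting letters from the middle idempotent block \emph{creates} membership, which is precisely the co-existential obstruction you were looking for. Note that the trailing \(A^*\) plays no role here, contrary to your intuition that the obstruction ``must be produced by the interaction of the marker with the trailing \(A^*\)'' and your worry that \(A^*\) is ``too permissive.'' Your own heuristic --- an idempotent block that, when inserted in place of the subword, destroys every admissible marker --- is this very witness read in the other direction; it only remained to write it down and check the two memberships.
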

\begin{proof}
  \Paragraph{Upper bound.} It is easily seen that a word is in the language
  \(K\) iff:
  \begin{itemize}
  \item Between every two \(a\)'s there is a \(b\), and vice versa;
  \item The first (\resp last) nonneutral letter of the word is \(a\) (\resp \(b\)).
  \end{itemize}
  Each of these statements can be written as an AND of ORs; for
  the first one, it is easier seen on the complement: we have an OR gate that
  selects two positions \(p_1, p_2\) and checks that there is an \(a\) at both
  positions and only \(c\)'s in between.  Thus \(K\) can be written as an AND of
  these, and so \(K\) has a circuit of depth exactly 2.

  To build a circuit for \(K'\), we start with an OR gate that selects two
  positions \(p_1, p_2\), and checks with an AND gate that they both contain a
  \(b\) and that only \(c\)'s appear in between.  We add as input to that AND gate
  the inputs of the AND gate for \(K\) where the positions considered are
  restricted to be smaller than \(p_1\).  This thus correctly checks that the
  prefix up to \(p_1\) is in \(K\), and that it is followed by a word starting with
  \(bc^*b\).

  \paragraph{Lower bound.}  If \(K' \in \widehat{\AC^0_2}\), then \(K'\) is either in
  \(\Sigma_2[\arb]\) or \(\Pi_2[\arb]\).

  Assume \(K' \in \Sigma_2[\arb]\), then \(K' \in \sto\) by \Cref{thm:main}.  We show that
  \(K' \notin \sto\) using the equations provided by \Cref{thm:eqs} with the wording
  appearing after the Theorem.  First, the word \((ab)^2\) is mapped to an
  idempotent: if \((ab)^2\) appears in a word, we can repeat it any number of
  times without changing membership to \(K'\).  Also, the word \(ba\) appears as a
  subword of \((ab)^2\).  However \((ab)^6\cdot bb \in K\), but \((ab)^2(ba)(ab)^2\cdot bb \notin
  K'\), showing that \(K' \notin \sto\).

  If we assume that \(K' \in \Pi_2[\arb]\), we have to show that the complement of
  \(K'\) is not in \(\sto\).  This time, we pick \((ab)^3\) as the idempotent, and
  \(bba\) as the subword.  Then \((ab)^9 \notin K'\) but \((ab)^3(bba)(ab)^3 \in K'\),
  hence the complement of \(K'\) is not in \sto.
\end{proof}

The language \(K\) itself appears very often in the literature pertaining to the
fine separation of small circuit classes~\cite{kouckypt05,brzozowskik78}.  This is no surprise:
The language is the first of the family of \emph{bounded-depth Dyck languages}.
These are the well-parenthesized expressions that nest no more than a fixed value:
\begin{align*}
  D_1^{(1)} & = K  = (ac^*b+c)^*,\\
  D_1^{(i)} & = (aD_1^{(i-1)}b+c)^*.\\
\end{align*}
(Here, \(a\) can be interpreted as ``opening parenthesis'' and \(b\) as ``closing.'')

Saliently, these languages separate the class \(\Sigma_i[<]\) from \(\Sigma_{i+1}[<]\), with
\(D_1^{(i)}\) belonging to the latter but not the former~\cite{brzozowskik78}.  It is open
whether these languages also separate the \(\Sigma_i[\arb]\) hierarchy, and thus the
\(\AC^0_i\) one.  We can show that:
\begin{theorem}
  \(D_1^{(2)} \in \AC^0_3 \setminus \widehat{\AC^0_2.}\)
\end{theorem}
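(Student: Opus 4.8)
The plan is to split the statement into the easy inclusion \(D_1^{(2)} \in \AC^0_3\) and the lower bound \(D_1^{(2)} \notin \widehat{\AC^0_2} = \Sigma_2[\arb] \cup \Pi_2[\arb]\). For the inclusion I would invoke that \(D_1^{(2)}\) sits at the third level of the Straubing--Thérien hierarchy, \ie \(D_1^{(2)} \in \Sigma_3[<]\)~\cite{brzozowskik78}. The standard translation of a \(\Sigma_3[<]\) sentence turns its three quantifier blocks into a depth-three OR--AND--OR circuit whose bottom gates query letters and whose position comparisons are hard-wired, yielding a polynomial-size depth-\(3\) circuit with an OR output gate, as required by the definition of \(\AC^0_3\). (Alternatively one builds the circuit by hand on top of the depth-\(2\) circuit for \(K = D_1^{(1)}\) exhibited above, by first isolating the outermost matched brackets.)

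For the lower bound the key observation is that \(c\) is a neutral letter of the regular language \(D_1^{(2)}\) and also of its complement. Hence \Cref{thm:main} applies on both sides: if \(D_1^{(2)} \in \Sigma_2[\arb] = \sta\) then \(D_1^{(2)} \in \sto\), and if \(D_1^{(2)} \in \Pi_2[\arb]\) then \(\overline{D_1^{(2)}} \in \sta\) and so \(\overline{D_1^{(2)}} \in \sto\). It therefore suffices to prove \(D_1^{(2)} \notin \sto\) and \(\overline{D_1^{(2)}} \notin \sto\), which I would do exactly as for \(K'\): by exhibiting witnesses that falsify the equations of \Cref{thm:eqs}, read over words as in the wording following that theorem. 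Throughout I read \(a\) as an opening and \(b\) as a closing bracket, so that a word lies in \(D_1^{(2)}\) iff it is balanced with nesting depth at most \(2\), the neutral \(c\)'s playing no role.

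For \(D_1^{(2)} \notin \sto\) I would take the depth-\(2\) ``bump'' \(s = aababb\) (a genuine \(D_1^{(2)}\) word, equal to \(a[abab]b\)) and the subword \(v = ba\) of \(s\). Replacing \(s\) by a power \(s^m\) whose image is idempotent, set \(w_1 = w_2 = w_3 = s^m\) with empty context. Then \(s^{3m}\) is a concatenation of depth-\(2\) bumps and lies in \(D_1^{(2)}\), whereas in \(s^m\,(ba)\,s^m\) the balanced prefix \(s^m\) returns the bracket count to \(0\) and the leading \(b\) of the inserted \(ba\) then drives it to \(-1\), so \(s^m(ba)s^m \notin D_1^{(2)}\). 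This falsifies the implication of \Cref{thm:eqs}. (This direction also follows directly from \(D_1^{(2)} \notin \Sigma_2[<]\)~\cite{brzozowskik78}.) For \(\overline{D_1^{(2)}} \notin \sto\) the substitution must instead repair a non-word. Here I would use the depth-\(1\) idempotent \(s = (ab)^p\), with \(p\) chosen so its image is idempotent, the \emph{unbalanced} subword \(v = a\) of \(s\), and the context \(w_0 = \varepsilon\), \(w_4 = b\). Then \(w_0 s^{3} w_4 = (ab)^{3p} b\) has one more \(b\) than \(a\), hence is unbalanced and lies in \(\overline{D_1^{(2)}}\); but \(w_0\, s\, a\, s\, w_4 = (ab)^{p} a (ab)^{p} b\) is balanced, and its bracket count never exceeds \(2\) (the second \((ab)^p\) oscillates between levels \(1\) and \(2\)), so it lies in \(D_1^{(2)}\) and \emph{not} in \(\overline{D_1^{(2)}}\). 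Again the implication of \Cref{thm:eqs} fails, now for the complement. Combining, \(D_1^{(2)} \notin \Sigma_2[\arb]\) and \(D_1^{(2)} \notin \Pi_2[\arb]\), so \(D_1^{(2)} \notin \widehat{\AC^0_2}\).

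The only genuine obstacle is the \(\Pi_2\) side: one must arrange that \emph{deleting} a subword of an idempotent block turns a non-member into a member. The trick is to let the discrepancy between the kept subword and the idempotent be \emph{unbalanced} (an extra \(a\)) while a fixed asymmetric context (the trailing \(b\)) absorbs exactly that surplus, so that the cube \(s^3\) is unbalanced (and out) but \(svs\) is balanced and shallow (and in). The remaining verifications---that each chosen word can be raised to an idempotent power without disturbing the balance and depth computations, and that \(ba\) (\resp \(a\)) really is a subword of the corresponding idempotent word---are routine.
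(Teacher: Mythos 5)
Your lower bound is correct and is essentially the paper's own argument: since \(c\) is a neutral letter of \(D_1^{(2)}\) (and hence of its complement), \Cref{thm:main} reduces the \(\Sigma_2[\arb]\) case to showing \(D_1^{(2)} \notin \sto\) and the \(\Pi_2[\arb]\) case to showing \(\overline{D_1^{(2)}} \notin \sto\), and you then falsify the equations of \Cref{thm:eqs} in their language wording. Your witnesses (\(s = aababb\) with subword \(ba\) and empty context; \((ab)^p\) with subword \(a\) and right context \(b\)) differ from the paper's (\((ab)^3\) with \(bba\); \((ab)^2\) with \(aab\) and right context \(b\)) but are completely analogous, and they check out.

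The upper bound, however, has a genuine gap. You derive \(D_1^{(2)} \in \AC^0_3\) from \(D_1^{(2)} \in \Sigma_3[<]\) via ``the standard translation,'' claiming it yields a depth-3 OR--AND--OR circuit whose bottom gates query letters. It does not. The outer existential block becomes an OR, the universal block an AND, and the innermost existential block can absorb the top disjunction of the DNF of the quantifier-free matrix; but what remains below that OR are \emph{conjunctions} of constantly many literals, not single letter queries. A depth-3 circuit whose inputs are arbitrary functions of constantly many positions is precisely an \(\widehat{\AC^0_3}\) circuit, and the paper recalls both that \(\Sigma_i[\arb] \cup \Pi_i[\arb] = \widehat{\AC^0_i}\) and that \(\AC^0_i \subsetneq \widehat{\AC^0_i}\) \emph{strictly}. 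So your route only proves \(D_1^{(2)} \in \widehat{\AC^0_3}\), which is weaker than the theorem's claim, and in a section whose whole point is the fine interleaving \(\AC^0_2 \subsetneq \widehat{\AC^0_2} \subsetneq \AC^0_3 \subsetneq \widehat{\AC^0_3}\), this distinction is the crux, not a technicality; there is no generic way to convert, since the two classes differ.

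Your parenthetical fallback (building the circuit by hand on top of the depth-2 circuit for \(K\)) is the right idea and is what the paper actually does: it invokes the decomposition
\(D_1^{(2)} = KbA^* \cup A^*bc^*bKbA^* \cup A^*aK \cup A^*aKac^*aA^*\)
from \cite{brzozowskik78}, uses a top OR to guess the positions of the distinguished letters \(a, b\) in that expression, and feeds each middle AND either direct checks that only \(c\)'s occupy an interval or the AND-of-ORs circuit for \(K\) restricted to an interval, so that the \(K\)-circuit's AND merges with the middle layer and the whole circuit has depth 3 with an OR output gate. As written, your proposal does not carry any such construction out, so the membership \(D_1^{(2)} \in \AC^0_3\) remains unproved.
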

\begin{proof}
  \Paragraph{Upper bound.} It can be shown~\cite[Lemma~4\(^+\)]{brzozowskik78} that:
  \[D_1^{(2)} = KbA^* \cup A^*bc^*bKbA^* \cup A^*aK \cup A^*aKac^*aA^*.\]%
  We can use an OR gate to select the positions of the letters \(a, b\) mentioned
  in that expression, and then either use an AND gate to verify that only
  \(c\)'s appear between them, or use the AND of ORs circuit for \(K\) from the
  previous proof to check that a word of \(K\) appears between two positions.

  \Paragraph{Lower bound.}  We again use the wording appearing after
  \Cref{thm:eqs}.  As in the previous proof, we need to show that neither
  \(D_1^{(2)}\) nor its complement are in \(\sto\).  For the language itself, we
  pick \((ab)^3\) as the word mapping to an idempotent and \(bba\) as the subword.
  We indeed have that \((ab)^9 \in D_1^{(2)}\), but
  \((ab)^3(bba)(ab)^3 \notin D_1^{(2)}\).  For the complement, we pick
  \((ab)^2\) as the word mapping to an idempotent and \(aab\) as the subword; they
  satisfy \((ab)^6\cdot b \notin D_1^{(2)}\) but \((ab)^2(aab)(ab)^2\cdot b \in D_1^{(2)}\).
\end{proof}

\section{Conclusion}\label{sec:concl}

We have shown the Neutral Straubing Property for \(\Sigma_2\):
\[\Sigma_2[\arb] \cap \Reg \cap \NL = \Sigma_2[<] \cap \NL.\]
To do so, we developed a new lower bound technique against circuits of depth 3
that relies on the \emph{entailment} relation of a language.  This relation
indicates how dependent the positions within a language are on one another.  We
believe that this relation may be exploited to show Straubing Properties at
higher levels of the \(\Sigma_i\) hierarchy.

Dropping the neutral-letter restriction from our main result is an interesting
task.  Although it would not imply a much stronger statement in terms of
circuits, it is still a stain on the clean statement that is the Straubing
Property.  We note that \Cref{conj:loc} is implied by the so-called
\emph{locality} property of the algebraic counterpart of \(\Sigma_2[<]\); showing
locality is a notoriously hard problem in algebraic language theory~(see, e.g.,
\cite{almeida96, tilson87, straubing85, straubing15}).

Showing Straubing Properties remains a challenging and wide open problem.  If we
are to follow our approach for \(\Sigma_i\), \(i \geq 3\), it requires in particular a
decidable characterization of the form of \Cref{thm:eqs}.  This is usually
provided by an equational characterization of the class, and although the
general shape of the equations for each \(\Sigma_i[<]\) is
known~\cite[Theorem~6.2]{placez19}, they do not readily imply decidability; in
fact, the decidability of \(\Sigma_i[<]\) for \(i \geq 5\) is open~\cite{place18}.

An outstanding example of the connection between the Straubing Property of a
circuit class and its computational power is given
in~\cite{kouckypt05,cadilhacp21}: \(\FO^2[\arb]\) has the Straubing Property if
and only if addition cannot be computed with a linear number of gates.  This
latter question, pertaining to the precise complexity of addition, was asked, in
particular, by Furst, Saxe, and Sipser~\cite[Section~5]{furstss84}.  Note that
even though \(\Delta_2[\reg]=\FO^2[\reg]\), this is not known to hold for the set of
arbitrary numerical predicates.

\begin{acks}
  We wish to thank Nikhil Balaji and Sébastien Tavenas.  The last author
  acknowledges financial support by the DFG grant ZE 1235/2-1.
\end{acks}

\bibliographystyle{ACM-Reference-Format}
\bibliography{bib}

\end{document}
\endinput
